\newtheorem{theorem}{Theorem}[section]
\newtheorem{question}[theorem]{Question}
\newtheorem{fact}[theorem]{Fact}
\newtheorem{corollary}[theorem]{Corollary}
\newtheorem{lemma}[theorem]{Lemma}
\newtheorem{claim}[theorem]{Claim}
\newtheorem{definition}[theorem]{Definition}
\newtheorem{observation}[theorem]{Observation}
\newcommand{\bin}{\{0,1\}}
\newcommand{\D}{\mathsf{D}}
\newcommand{\R}{\mathsf{R}}
\newcommand{\RS}{\mathsf{RS}}
\newcommand{\product}{\mathsf{prod}}
\newcommand{\E}{\mathbf{E}}
\newcommand{\I}{\mathsf{Inf}}
\newcommand{\Var}{\mathsf{Var}}
\newcommand{\Qu}{\mathsf{Q}}
\newcommand{\Dprod}[1]{\mathsf{\overline{\D_{#1}^{\product}}}}
\newcommand{\Dprodmu}[2]{\mathsf{\overline{\D_{#1}^{#2}}}}
\newcommand{\A}{\mathcal{A}}
\newcommand{\Rprodnb}[1]{\R_{#1}^{\product}}
\newcommand{\Dprodnb}[1]{\D_{#1}^{\product}}
\newcommand{\Prb}{\mathcal{P}}
\title{Randomized query composition and product distributions}
\author{
Swagato Sanyal\footnote{IIT Kharagpur, \textsf{swagato@cse.iitkgp.ac.in}.}\footnote{S.S.~is supported by an ISIRD Grant from Sponsored Research and Industrial Consultancy, IIT Kharagpur.}}
\date{}
\begin{document}

\maketitle

\begin{abstract}
Let $\R_\epsilon$ denote randomized query complexity for error probability $\epsilon$, and $R:=\R_{1/3}$. In this work we investigate whether a perfect composition theorem $\R(f \circ g^n)=\Omega(\R(f)\cdot \R(g))$ holds for a relation $f \subseteq \bin^n \times \mathcal{S}$ and a total inner function $g:\bin^m \to \bin$.

Composition theorems of the form $\R(f \circ g^n)=\Omega(\R(f)\cdot \mathsf{M}(g))$ are known for various measures $\mathsf{M}$. Such measures include the sabotage complexity $\RS$ defined by Ben-David and Kothari (ICALP 2015), the max-conflict complexity defined by Gavinsky, Lee, Santha and Sanyal (ICALP 2019), and the linearized complexity measure defined by Ben-David, Blais, G{\"o}{\"o}s and Maystre (FOCS 2022). The above measures are asymptotically non-decreasing in the above order. However, for total Boolean functions no asymptotic separation is known between any two of them.

Let $\Dprodnb{}$ denote the maximum distributional query complexity with respect to any product (over variables) distribution . In this work we show that for any total Boolean function $g$,  the sabotage complexity $\RS(g)=\widetilde{\Omega}(\Dprodnb{}(g))$. This gives the composition theorem $\R(f \circ g^n)=\widetilde{\Omega}(\R(f)\cdot \Dprodnb{}(g))$. In light of the minimax theorem which states that $\R(g)$ is the maximum distributional complexity of $g$ over any distribution, our result makes progress towards answering the composition question.

We prove our result by means of a complexity measure $\Rprodnb{\epsilon}$ that we define for total Boolean functions. Informally, $\Rprodnb{\epsilon}(g)$ is the minimum complexity of any randomized decision tree with unlabelled leaves with the property that, for every product distribution $\mu$ over the inputs, the average bias of its leaves is at least $((1-\epsilon)-\epsilon)/2=1/2-\epsilon$. It follows by standard arguments that $\Rprodnb{1/3}(g) = \Omega(\Dprodnb{}(g))$. We show that $\Rprodnb{1/3}$ is equivalent to the sabotage complexity up to a logarithmic factor.

Ben-David and Kothari asked whether $\RS(g)=\Theta(\R(g))$ for total functions $g$. We generalize their question and ask if for any error $\epsilon$, $\Rprodnb{\epsilon}(g)=\widetilde{\Theta}(\R_\epsilon(g))$. We observe that the work by Ben-David, Blais, G{\"o}{\"o}s and Maystre (FOCS 2022) implies that for a perfect composition theorem $\R_{1/3}(f \circ g^n)=\Omega(\R_{1/3}(f)\cdot\R_{1/3}(g))$ to hold for any relation $f$ and total function $g$, a necessary condition is that $\R_{1/3}(g)=O(\frac{1}{\epsilon}\cdot \R_{\frac{1}{2}-\epsilon}(g))$ holds for any total function $g$. We show that $\Rprodnb{\epsilon}(g)$ admits a similar error-reduction $\Rprodnb{1/3}(g)=\widetilde{O}(\frac{1}{\epsilon}\cdot\Rprodnb{\frac{1}{2}-\epsilon}(g))$. Note that from the definition of $\Rprodnb{\epsilon}$ it is not immediately clear that $\Rprodnb{\epsilon}$ admits any error-reduction at all.

We ask if our bound $\RS(g) = \widetilde{\Omega}(\Dprodnb{}(g))$ is tight. We answer this question in the negative, by showing that for the NAND tree function, sabotage complexity is polynomially larger than $\Dprodnb{}$. Our proof yields an alternative and different derivation of the tight lower bound on the bounded error randomized query complexity of the NAND tree function (originally proved by Santha in 1985), which may be of independent interest. Our result shows that sometimes, $\Rprodnb{1/3}$ and sabotage complexity may be useful in producing an asymptotically larger lower bound on $\R(f \circ g^n)$ than $\widetilde{\Omega}(\R(f)\cdot \Dprodnb{}(g))$. In addition, this gives an explicit polynomial separation between $\R$ and $\Dprodnb{}$ which, to our knowledge, was not known prior to our work.
\end{abstract}

\section{Introduction}
\label{sec:intro}
A decision tree or a query algorithm for a relation $f \subseteq \bin^n \times S$ can query various bits of an input bit string $x=(x_1, \ldots, x_n)$ in an adaptive fashion, with the goal of outputting an $s \in S$ such that $(x, s) \in f$. A randomized decision tree is assumed to have access to some source of randomness, and may choose a variable to query based on responses to previous queries, and the randomness. The complexity of a decision tree is the number of variables that it queries in the worst case. A decision tree that uses no randomness and for every $x$ outputs an $s$ such that $(x, s)\in f$ is called a deterministic decision tree computing $f$. The randomized query complexity of $f$ for error $\epsilon$, denoted by $\R_\epsilon(f)$, is the least complexity of any randomized decision tree that, for every input $x$, outputs $s$ such that $(x, s) \in f$ with probability (over its own randomness) at least $1-\epsilon$. Similarly, the deterministic query complexity of $f$, denoted by $\D(f)$, is the least complexity of any deterministic decision tree computing $f$. For a probability distribution $\mu$ over the domain of $f$, the distributional query complexity of $f$ with respect to $\mu$ and for error $\epsilon$, denoted by $\D_\epsilon^\mu(f)$, is the least complexity of any deterministic decision tree that, for a random input $x$ sampled from $\mu$, fails to output an s such that $(x, f) \in f$ with probability at most $\epsilon$. Define $\R(f):=\R_{1/3}(f)$ and $\D^\mu(g):=\D^\mu_{1/3}(g)$.

For a total Boolean function $g:\bin^m \to \bin$, the composition $f \circ g^n$ is the relation comprising all pairs $((x_1, \ldots, x_n), s) \in (\bin^m)^n$ such that $((g(x_1), \ldots, g(x_n)), s) \in f$.

It is easy to see that $\D(f \circ g^n) \leq \D(f)\cdot \D(g)$; a decision tree for $f \circ g^n$ may be constructed simply by simulating an optimal tree of $f$, and serving each query that it makes by solving the corresponding copy of $g$ using an optimal tree of $g$. For randomized query algorithms, a similar idea works out, albeit with some additional work to handle errors, to show that $\R(f \circ g^n)=O(\R(f)\cdot\R(g)\cdot \log \R(f))$.

Composition questions ask whether the aforementioned upper bounds on the complexity of $f \circ g^n$ are asymptotically optimal. These are fundamental questions about the structure of optimal algorithms for $f \circ g^n$, and have received considerable attention in research.

It is known from the works of Montenaro \cite{M14} and Tal \cite{T13} that $\D(f \circ g^n) = \D(f)\cdot \D(g)$. Thus the composition question for deterministic query complexity has been completely answered. On the contrary, in spite of extensive research, a complete answer to the composition question for randomized query complexity is still lacking.
\subsection{Past works on randomized query composition}
Past works dealt with a more general composition question for randomized query complexity, where the inner function $g$ is allowed to be partial. The definition of $f \circ g^n$ and the aforementioned upper bounds on $\D(f \circ g^n)$ and $\R(f \circ g^n)$ can be accordingly generalized; we are omitting the details in this paper. In 2015, Ben-David and Kothari \cite{BK15} defined the sabotage complexity measure $\RS(g)$ of a partial Boolean function $g$. They showed that $\R(f \circ g^n) = \Omega(\R(f)\cdot \RS(g))$. They further showed that for total $g$, $\RS(g) =\widetilde{\Omega}(\sqrt{\R(g)})$, implying $\R(f \circ g^n) = \widetilde{\Omega}(\R(f)\cdot \sqrt{\R(g)})$. In 2019, Gavinsky, Lee, Santha and Sanyal \cite{GLSS19} introduced the max-conflict complexity $\overline{\chi}(g)$ and showed that $\R(f \circ g^n) = \Omega(\R(f)\cdot \overline{\chi}(g))$. They further showed that even for partial functions $g$, $\overline{\chi}(g)=\Omega(\sqrt{\R(g)})$, implying $\R(f \circ g^n) = \Omega(\R(f)\cdot \sqrt{\R(g)})$. Moreover, they showed that for all partial functions $g$, $\overline{\chi(g)}=\Omega(\RS(g))$. They also demonstrated unbounded separation between $\overline{\chi}(g)$ and $\RS(g)$ for a partial $g$. In 2022, Ben-David, Blais, G{\"o}{\"o}s and Maystre \cite{BBGM22} introduced the linearized complexity measure $\mathsf{L}(g)$. They showed that for any partial $g$, $\R(f \circ g^n) = \Omega(\R(f)\cdot \mathsf{L}(g))$, and that $\mathsf{L}$ is the largest measure $\mathsf{M}$ for which the statement $\R(f \circ g^n) = \Omega(\R(f)\cdot \mathsf{M}(g))$ holds. They also demonstrated polynomial separation between $\mathsf{L}(g)$ and $\overline{\chi}(g)$ for a partial $g$.

A different line of work has focused on proving bounds on $\R(f \circ g^n)$ of the form $\Omega(\mathsf{M}(f)\cdot \R(g))$ for some complexity measure $\mathsf{M}$ \cite{AGJKLMSS18, GJPW18, BDGHMT20}. In 2020 Ben-David and Blais \cite{BB20} defined the noisy query complexity $\mathsf{noisyR}$ and showed that $\mathsf{noisyR}$ is the largest measure $\mathsf{M}$ for which the statement $\R(f \circ g^n)=\Omega(\mathsf{M}(f)\cdot \R(g))$ holds. In 2023, Chakraborty et~al. \cite{CKMPSS23} showed that for the special case when $\R(f)=\Theta(n)$, a near-perfect randomized query composition theorem $\R(f \circ g^n)=\widetilde{\Omega}(\R(f)\cdot \R(g))$ holds.
\subsection{Our results}
This work investigates the possibility of a perfect randomized query composition theorem $\R(f \circ g^n)=\Omega(\R(f) \cdot \R(g))$ when $g$ is a total function. As discussed in the preceding section, past works have introduced measures $\RS, \overline{\chi}$ and $\mathsf{L}$ that are asymptotically non-decreasing in the above order. As discussed before, we also know that any two of them can be asymptotically separated. However, the Boolean functions that witness these separations are all partial, and to the best of our knowledge, no separation between these measures is known for total functions. Does one of these measures coincides with $\R$ for total functions?

Ben-David and Kothari asked in their paper whether $\RS(g)=\Theta(\R(g))$ for total $g$. Our first result is that for any total $g$, $\RS(g)$ is, up to a logarithmic factor, at least the maximum distributional query complexity of $g$ for any product (over variables) distribution. Let \textsf{PROD} be the set of all product distributions over the domain $\bin^m$ of $g$. Define $\Dprodnb{}(g):=\max_{\mu \in \mathsf{PROD}}\{\D^\mu(g)\}$. 
\begin{theorem}
\label{thm:th1}
For any total function $g:\bin^m \to \bin$, $$\RS(g)=\widetilde{\Omega}(\Dprodnb{}(g)).$$
\end{theorem}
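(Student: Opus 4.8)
The plan is to exhibit, for any product distribution $\mu$ on $\bin^m$ on which $g$ is hard to compute, a sabotaged instance of $g$ that is hard to solve by a randomized decision tree, thereby lower-bounding $\RS(g)$ by (almost) $\D^\mu(g)$. Recall that $\RS(g)$ can be characterized as the randomized query complexity of distinguishing $1$-inputs of $g$ from $0$-inputs when the decision tree is fed a ``sabotaged'' string with some coordinates marked by a special symbol $*$ consistent with both a $0$-input and a $1$-input; equivalently one may use the formulation where the algorithm is handed an input drawn from a distribution supported on pairs $(x,y)$ with $g(x)=0$, $g(y)=1$ and must find a coordinate where $x$ and $y$ differ. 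I would work with the latter ``find a disagreement'' formulation, since it interfaces cleanly with product distributions.

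The key steps, in order. First, fix $\mu \in \mathsf{PROD}$ achieving $\D^\mu(g) = \Dprodnb{}(g)$, and write $\mu = \mu_1 \times \cdots \times \mu_m$ with $\mu_i$ the marginal on coordinate $i$; let $p_i = \mu_i(1)$. By the minimax/Yao principle applied to $\D^\mu$, a deterministic tree of depth $o(\D^\mu(g))$ errs with probability $> 1/3$ on $\mu$. Second, build a hard distribution $\nu$ over sabotaged instances as follows: sample $z \sim \mu$; the sabotaged string will agree with $z$ on a ``typical'' random subset of coordinates and be starred elsewhere, arranged so that the revealed pattern is consistent both with some $x \in g^{-1}(0)$ and some $y \in g^{-1}(1)$. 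The natural way to do this: sample two independent inputs, one roughly from $\mu$ conditioned on $g = 0$ and one roughly from $\mu$ conditioned on $g=1$, couple them to agree on as many coordinates as possible (coordinate $i$ independently takes a common value with probability close to $\max(p_i,1-p_i)$-type quantity, else gets starred), and present the common part. Third, argue that a low-depth randomized tree for the sabotage problem, run on $\nu$, would with good probability only query non-starred coordinates, and on that event it is effectively running on inputs that look like independent $\mu|_{g=0}$ and $\mu|_{g=1}$ samples restricted to the queried coordinates; a tree that rarely finds a disagreement then yields (by a standard averaging / fooling argument) a shallow deterministic tree that predicts $g$ on $\mu$ with advantage, contradicting $\D^\mu(g) = \Dprodnb{}(g)$ via the Yao bound. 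The logarithmic loss in $\widetilde\Omega$ comes from the usual success-amplification / error-budget bookkeeping needed to pass between ``fails to find a disagreement with probability $\le 1/3$'' and the $1/3$-error distributional bound, and possibly from a union bound over queried coordinates being non-starred.

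The main obstacle I anticipate is Step three: controlling the correlation between the event ``the tree has queried only revealed (non-star) coordinates so far'' and the conditional law of the hidden input, so that conditioned on the transcript the situation genuinely reduces to distributional computation of $g$ under a product distribution. Because $g^{-1}(0)$ and $g^{-1}(1)$ need not be product sets, conditioning on $g(x)=0$ destroys independence across coordinates, so the coupling must be designed carefully — e.g. first sample the ``base'' product string $w\sim\mu$, then decide which coordinates to corrupt, and only then resample the corrupted coordinates subject to the $g=0$ (resp.\ $g=1$) constraint — so that the part of the string the algorithm actually sees remains distributed according to the product measure $\mu$ (or a closely related product measure) regardless of the hidden completion. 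Making this reduction lose only a logarithmic factor, rather than a polynomial one, is the delicate quantitative point; I would handle it by ensuring the expected number of starred coordinates is a constant fraction smaller than the depth budget, so that a tree of depth $o(\Dprodnb{}(g)/\log)$ hits a star only with small probability, and by absorbing the resulting error terms into the constant/logarithmic slack of $\widetilde\Omega$.
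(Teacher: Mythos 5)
Your overall strategy --- fix a hard product distribution $\mu$, build a coupled distribution $\nu$ over pairs in $g^{-1}(0)\times g^{-1}(1)$, and argue that a shallow tree cannot find a disagreement --- is genuinely different from the paper's route, but as written it has two gaps, and the second one is exactly the crux of the theorem. First, the coupling is never actually constructed, and your own fix does not resolve the difficulty you correctly identify: resampling a subset of coordinates of a product-distributed string \emph{subject to the global constraint} $g=0$ (resp.\ $g=1$) again destroys the product structure of the visible part, because $g^{-1}(0)$ and $g^{-1}(1)$ are not product sets. Relatedly, the quantitative claim that a tree of depth $o(\Dprodnb{}(g)/\log)$ ``hits a star only with small probability'' because the expected number of stars is small does not hold for adaptive trees: the tree can hunt for the starred coordinates (for $\mathrm{OR}_m$ with $p_i=1/m$ there is essentially one disagreeing coordinate, yet the lower bound is $\Omega(m)$ --- the argument must be that the star's \emph{location} is hidden, which is precisely what needs proof). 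Second, and more seriously, the step that converts between the sabotage task and computing $g$ is asserted as a ``standard averaging / fooling argument'' but is not standard, and your phrasing of it is inverted: a tree that ``rarely finds a disagreement'' is a \emph{failing} sabotage algorithm and yields no predictor for $g$; conversely, a coordinate where $x$ and $y$ differ does not by itself reveal $g(x)$. Closing this loop is the entire content of the theorem.

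For comparison, the paper never builds a hard pair distribution. It first shows (Lemma~\ref{lem:rs_and_sens_ip}) that distinguishing every $0$/$1$ pair with probability $1-\epsilon$ is \emph{equivalent} to querying every sensitive bit of every input with probability $1-\epsilon$ --- this is where totality enters. Amplifying $O(\log s(g))$ times, one gets a tree that on each input queries \emph{all} its sensitive bits except with probability $1/s(g)$. Then, for any product $\mu$, the restriction of $g$ to a typical leaf has small average sensitivity under $\mu$ conditioned on that leaf (which is again a product measure on a subcube), hence small influence by Equation~(\ref{eq:inf_avg_s}), hence small variance by the Poincar\'e inequality (Lemma~\ref{lem:poincare}), i.e.\ the leaf is biased. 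Labelling leaves by the majority value gives the distributional algorithm, so $\RS(g)=\widetilde\Omega(\D^\mu(g))$. If you want to salvage your approach, the place to look is this sensitivity/influence mechanism: the pairs one must charge the sabotage algorithm for are the Hamming-neighbour pairs $(x,x^{\oplus i})$ along sensitive directions, not far-apart samples from $\mu|_{g=0}$ and $\mu|_{g=1}$, and the passage from ``queries all sensitive bits'' to ``determines $g$ on a product distribution'' requires an analytic inequality such as Poincar\'e rather than a direct coupling.
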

Informally, the sabotage complexity captures the minimum number of randomized queries required to distinguish any pair of input strings on which the function values differ (see Section~\ref{par:sabotage} for a formal definition). Theorem~\ref{thm:th1} shows that this task is at least as hard as deciding the function on every possible product distribution (potentially with a different query algorithm for each distribution).

Together with the composition theorem of Ben-David and Kothari, Theorem~\ref{thm:th1} immediately yields the following corollary.
\begin{corollary}
\label{cor:toth1}
For any total function $g:\bin^m \to \bin$, $$\R(f \circ g^n)=\widetilde{\Omega}(\R(f)\cdot \Dprodnb{}(g)).$$
\end{corollary}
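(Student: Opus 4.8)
\textbf{Proof proposal for Corollary~\ref{cor:toth1}.} The plan is to derive the corollary directly from Theorem~\ref{thm:th1} together with the composition theorem of Ben-David and Kothari~\cite{BK15}; there is no new ingredient. Recall that \cite{BK15} establishes $\R(f\circ g^n)=\Omega(\R(f)\cdot\RS(g))$ for every relation $f\subseteq\bin^n\times\mathcal{S}$ and every Boolean inner function $g$. Since they prove this bound even when $g$ is partial, it applies in particular to the total $g$ we care about, and since their outer object is already permitted to be a relation, nothing about the statement needs to be adapted.

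Given this, I would invoke Theorem~\ref{thm:th1} in its quantitative form, namely that there is a constant $c>0$ with $\RS(g)\geq c\cdot\Dprodnb{}(g)/\mathrm{polylog}(m)$ for every total $g:\bin^m\to\bin$, and substitute it into the composition bound:
\[
\R(f\circ g^n)=\Omega\big(\R(f)\cdot\RS(g)\big)=\Omega\!\left(\frac{\R(f)\cdot\Dprodnb{}(g)}{\mathrm{polylog}(m)}\right)=\widetilde{\Omega}\big(\R(f)\cdot\Dprodnb{}(g)\big).
\]
The last equality only folds the $\mathrm{polylog}(m)$ factor back into the $\widetilde{\Omega}$ notation, which is legitimate because $\Dprodnb{}(g)\leq m$, so a loss of $\mathrm{polylog}(m)$ is a polylogarithmic loss in the quantity being bounded.

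I do not anticipate any real difficulty: all of the content sits in Theorem~\ref{thm:th1}, and the remaining points are bookkeeping — quoting the Ben-David--Kothari theorem in the right generality (relation $f$, possibly-partial $g$), and being explicit that the polylog hidden in the corollary's $\widetilde{\Omega}$ is exactly the one inherited from Theorem~\ref{thm:th1}. If anything, the only obstacle is presentational: deciding how much of the composition-theorem setup (partial inner functions, the precise meaning of $f\circ g^n$ in that setting) to recall here versus to defer to the cited work.
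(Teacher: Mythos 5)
Your proposal is correct and matches the paper's own derivation, which likewise obtains the corollary immediately by combining Theorem~\ref{thm:th1} with the Ben-David--Kothari composition theorem $\R(f\circ g^n)=\Omega(\R(f)\cdot\RS(g))$. The bookkeeping about the polylogarithmic loss being absorbed into $\widetilde{\Omega}$ is exactly right.
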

The minimax theorem (Fact~\ref{fact:minimax}) states that $\R(g)=\max_{\mu} D^\mu(g)$, where the maximum is over all probability distributions over the  domain of $g$. In this light Corollary~\ref{cor:toth1} makes progress towards answering the randomized composition question for total inner functions. An additional motivation for our first result is that product distributions comprise a natural class of distributions that has received significant attention in Boolean function complexity research \cite{JKKLSSV20, HJR16, K16, S02}.

We prove Theorem~\ref{thm:th1} by introducing a new complexity measure $\Rprodnb{\epsilon}$. Informally speaking, $\Rprodnb{\epsilon}(g)$ is the minimum complexity of any randomized decision tree with unlabelled leaves with the property that, for every product distribution $\mu$ over the inputs, the average bias of its leaves is at least $((1-\epsilon)-\epsilon)/2=1/2-\epsilon$. Define $\Rprodnb{}(g):=\Rprodnb{1/3}(g)$. See Section~\ref{par:rprod} for formal definitions. It follows by standard arguments that $\Rprodnb{}(g) = \Omega(\Dprodnb{}(g))$ (see Claim~\ref{clm:rprod>=dprod}). Our next next result shows that $\RS$ is characterized by $\Rprodnb{}$ up to a logarithmic factor.
\begin{theorem}
\label{thm:th2}
For all total functions $g:\bin^m \to \bin$, 
\begin{enumerate}
    \item $\RS(g)=O(\Rprodnb{}(g))$, and
    \item $\RS(g)=\Omega(\Rprodnb{}(g) / \log \Rprodnb{}(g))$.
\end{enumerate}
\end{theorem}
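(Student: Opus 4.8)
The plan is to prove the two inequalities by essentially translating between the "sabotage" picture and the "average bias over product distributions" picture, using the structure of product distributions as the bridge.

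For part (1), $\RS(g) = O(\Rprodnb{}(g))$: I would take an optimal randomized decision tree $T$ with unlabelled leaves witnessing $\Rprodnb{}(g)$, i.e.\ a tree of cost $\Rprodnb{}(g)$ such that for every product distribution $\mu$ the average leaf bias is at least $1/6$. The goal is to build from $T$ a randomized algorithm that solves the sabotage problem with cost $O(\Rprodnb{}(g))$. The key observation is that a "sabotaged" input (in the sense of Ben-David--Kothari) can be thought of as a distribution: a sabotaged string has some coordinates pinned to $0$, some to $1$, and some set to $\star$, and the requirement is to detect a $\star$. I would associate to such a sabotaged input the product distribution $\mu$ that is a point mass on the pinned coordinates and uniform (or, more carefully, a two-point distribution on $\{$the $0$-completion, the $1$-completion$\}$) on the $\star$ coordinates. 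Since the sabotaged input is consistent with both a $0$-input and a $1$-input of $g$, running $T$ on this $\mu$ must reach leaves of average bias $\ge 1/6$; but a leaf of positive bias under such a distribution must have queried a coordinate that distinguishes a $0$-completion from a $1$-completion, i.e.\ it must have queried a $\star$ coordinate (or else the leaf is reached with the same conditional probability on both sides and has zero bias). Hence $T$ finds a $\star$ with probability $\Omega(1)$, which is exactly what a sabotage algorithm needs; constant success probability can then be amplified by $O(1)$ repetitions without changing the asymptotics, or folded into the definition of $\RS$. I expect this direction to be the cleaner of the two, the main care being to set up the correspondence between sabotaged inputs and product distributions so that "average leaf bias $\ge 1/6$" really does force a $\star$-query with constant probability.

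For part (2), $\RS(g) = \Omega(\Rprodnb{}(g)/\log\Rprodnb{}(g))$: here I would start from an optimal sabotage algorithm $A$ of cost $\RS(g)$ and build a randomized tree $T'$ with unlabelled leaves that achieves average bias $\ge 1/2 - \epsilon$ under every product distribution, at cost $O(\RS(g)\cdot\log\RS(g))$. The natural approach is an error-reduction/boosting argument: first observe that a sabotage algorithm of cost $\RS(g)$ yields, for any product distribution $\mu$, a weak bias (say bias $\ge c$ for some constant $c$, by an argument dual to part (1): if $\mu$ has both $0$-mass and $1$-mass then "sabotaging" along the randomness of $\mu$ produces a sabotaged instance, detecting a $\star$ distinguishes the two sides, and a correct label can be read off); then amplify the bias from $c$ to $1/2-\epsilon$ by running $O(\log(1/\epsilon))$ — or for constant $\epsilon$, $O(1)$ — independent copies and taking a majority, which costs an extra $\log$ factor. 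Actually the $\log$ loss more plausibly comes from a different place: the sabotage algorithm is only guaranteed to succeed with bounded error on each sabotaged pair, and to combine this into a single tree that works simultaneously well on average against \emph{every} product distribution, I may need to truncate/repeat the sabotage algorithm $O(\log \RS(g))$ times to drive its per-instance failure probability down to $1/\mathrm{poly}(\RS(g))$, after which a union bound or an averaging argument over the (effectively polynomially many relevant) leaves gives the uniform average-bias guarantee. The hard part, and where I would spend the most effort, is precisely this step: showing that a low per-instance error suffices to get good \emph{average} bias against an \emph{arbitrary} product distribution, since a product distribution can spread its mass in complicated ways across the $0$- and $1$-inputs of $g$, and one must argue that the sabotage-based bias estimator aggregates correctly in expectation over such a $\mu$. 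I anticipate the cleanest route is to condition on the "boundary" behaviour of $\mu$ — decompose the draw of $x\sim\mu$ by first revealing enough coordinates to pin down $g(x)$, which under a product distribution can be done coordinate-by-coordinate — and relate each conditional sub-instance to a sabotaged pair.

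In summary: part (1) is a fairly direct simulation of the unlabelled-leaf tree as a sabotage algorithm via the "sabotaged input = product distribution" dictionary; part (2) is the reverse simulation combined with boosting, and the logarithmic overhead is the price of turning a bounded-error per-pair sabotage guarantee into a uniform average-bias guarantee over all product distributions. The main obstacle is controlling this aggregation in part (2); everything else should be routine once the dictionary between sabotaged inputs and (two-point-on-$\star$) product distributions is pinned down.
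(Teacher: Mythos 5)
There is a genuine gap in both directions, and it is the same missing idea: the paper's argument does not go directly from pairs $(x,y)\in g^{-1}(0)\times g^{-1}(1)$ to product distributions, but first reduces the sabotage task to the task of \emph{querying every sensitive bit of every input} (pairs at Hamming distance $1$). This reduction (Lemma~\ref{lem:rs_and_sens_ip}) is where totality is used: walking from $x$ to $y$ along the coordinates where they differ, the function value must flip at some first step, producing a sensitive pair; a tree that misses all differing coordinates on $x$ also misses that sensitive coordinate on the corresponding intermediate input. Your part (1) dictionary breaks without this step. For a pair differing on a set $B$ with $|B|>1$, the two-point distribution on $\{x,y\}$ is \emph{not} a product distribution, and the product distribution that is uniform on the $\star$-coordinates puts only $2^{-|B|}$ mass on $x$ and $y$; the restriction of $g$ to that subcube can be extremely biased (e.g.\ an AND of the free coordinates), so a tree can have average leaf bias well above $1/6$ under that $\mu$ while never querying a coordinate of $B$ on the specific input $x$. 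Once you restrict to sensitive pairs, your dictionary is exactly the paper's: the distribution putting mass $1/2$ on each of $x$ and $x^{\oplus i}$ is a product distribution, and a leaf containing both has min-bias exactly $1/2$, forcing the tree to query $x_i$ with constant probability.

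In part (2) you correctly identify the hard step --- converting a per-instance guarantee into an average-bias guarantee against every product distribution --- but your proposed route (a ``boundary decomposition'' revealing coordinates until $g$ is determined) is not the mechanism, and it is not clear it can be made to work. The paper instead amplifies the sensitive-bit guarantee: repeating the sabotage algorithm $O(\frac{1}{\epsilon}\log s(g))$ times drives the probability of missing any fixed sensitive bit below $1/s(g)^2$, and a union bound over the at most $s(g)$ sensitive bits of each input (not over leaves) shows that with probability $1-1/s(g)$ all sensitive bits of $x$ are queried. The decisive analytic step you are missing is then: if a leaf's restriction $g|_\ell$ has small expected sensitivity under $\mu|_\ell$ (which is again a product distribution, since $\ell$ is a subcube), then by Equation~(\ref{eq:inf_avg_s}) its influence is small, and by the Poincar\'e inequality (Lemma~\ref{lem:poincare}) its variance, hence its min-bias, is small. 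This is what makes the aggregation over an arbitrary product distribution work, and it is also where the $\log s(g) = O(\log \Rprodnb{}(g))$ factor in part (2) of the theorem comes from. So your architecture (two simulations, with a logarithmic loss from repetition in one direction) matches the paper's, but both of the load-bearing lemmas --- the sensitivity reduction and the influence/Poincar\'e bound --- are absent from the proposal.
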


\noindent Theorem~\ref{thm:th1} follows immediately from Theorem~\ref{thm:th2}(2) and the fact that $\Rprodnb{}(g) = \Omega(\Dprodnb{}(g))$ (Claim~\ref{clm:rprod>=dprod}).

Since any non-trivial product distribution is supported on all of $\bin^m$, $\Rprodnb{}(g)$ and $\Dprodnb{}(g)$ are well-defined only for total functions $g$. The proof of Theorem~\ref{thm:th2} (that goes via Lemma~\ref{lem:main1} discussed later) makes important use of the totality of $g$. We hope that the measure $\Rprodnb{}$, the characterization of $\RS$ presented in Theorem~\ref{thm:th2}, and the insights acquired in our proof techniques, specially pertaining to ways of exploiting totality, will be useful in future research to resolve whether $\RS(g)=\Theta(\R(g))$ for total functions $g$.

In light of Theorem~\ref{thm:th2} the question whether $\R(g)=\Theta(\RS(g))$ for total functions $g$ translates to the question whether $\R(g)=\widetilde{\Theta}(\Rprodnb{}(g))$. We generalize this question for every error $\epsilon$.
\begin{question}
\label{qn:qn1}
Is it true that for every total function $g:\bin^m \to \bin$ and $\epsilon: \mathbb{N} \to (0, 1/2)$, $\R_{\epsilon(m)}(g)=\widetilde{\Theta}(\Rprodnb{\epsilon(m)}(g))?$
\end{question}
From the work of Ben-David, Blais, G{\"o}{\"o}s and Maystre \cite{BBGM22} it follows that for any error parameter $\epsilon$, the linearized complexity measure $\mathsf{L}(g)$ of $g$ is bounded above by $O\left(\frac{1}{\epsilon}\cdot\R_{\frac{1}{2}-\epsilon}(g)\right)$. As discussed before, they also show that $\mathsf{L}$ is the largest measure $\mathsf{M}$ for which the statement $\R(f \circ g)=\Omega(\R(f) \mathsf{M}(g))$ holds for all relations $f$ and partial functions $g$. We thus have that for a perfect composition theorem $\R(f \circ g)=\Omega(\R(f)\R(g))$ to hold for any relation $f$ and any total Boolean function $g$, a necessary condition is that $\R(g)=O(\frac{1}{\epsilon}\cdot \R_{\frac{1}{2}-\epsilon}(g))$ holds for any total Boolean function $g$. In light of Question~\ref{qn:qn1}, we may ask if $\Rprodnb{\epsilon}$ admits a similar error reduction. Our next result answers this question in the affirmative (up to a logarithmic factor).
\begin{theorem}
\label{thm:th3}
For every total function $g:\bin^m \in \bin$ and $\epsilon:\mathbb{N} \to (0, 1/2)$,
\[\Rprodnb{}(g)=\widetilde{O}\left(\frac{1}{\epsilon(m)}\cdot\Rprodnb{\frac{1}{2}-\epsilon(m)}(g)\right).\]
\end{theorem}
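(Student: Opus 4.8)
The plan is to take an optimal randomized decision tree $\mathcal{T}$ witnessing $\Rprodnb{\frac{1}{2}-\epsilon}(g)$ --- so $\mathcal{T}$ has cost $k := \Rprodnb{\frac{1}{2}-\epsilon}(g)$ and, for every product distribution $\mu$, the leaves of $\mathcal{T}$ have average bias at least $\epsilon$ under $\mu$ --- and to amplify this to average bias $1/6$ by combining $\widetilde{O}(1/\epsilon)$ copies of $\mathcal{T}$ into a single oblivious tree $\mathcal{T}'$. The naive route --- reweighting $\mu$ after each round in boosting fashion --- is exactly what is obstructed here (and is presumably why the paper stresses that error reduction is ``not immediately clear''), since a boosting reweighting of a product distribution need not be product, and then $\mathcal{T}$'s guarantee no longer applies to it. So I would keep everything inside the class of product distributions by working with \emph{conditionings} instead of reweightings: run $\mathcal{T}$, read the bits it queries, and rerun a fresh copy of $\mathcal{T}$ on the residual coordinates given those bits; this is legitimate because (i) conditioning a product distribution on the values of some coordinates leaves a product distribution on the rest, and (ii) the restriction of the total function $g$ to the resulting subcube is again total, so $\mathcal{T}$'s defining property can be re-invoked verbatim at every stage. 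The key subtlety is that $\mathcal{T}'$ must be a fixed composition of copies of $\mathcal{T}$ (no distribution in its description), and the analysis must deliver the average-bias guarantee for \emph{every} product distribution at once.

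Fix a product distribution $\mu$ and track the posterior $p_i = \Pr_\mu[g = 1 \mid \text{transcript after } i \text{ rounds}]$, which is a martingale; the guarantee of $\mathcal{T}$ says that in round $i+1$ the conditional expectation of $|p_{i+1} - \tfrac{1}{2}|$, given the current transcript, is at least $\epsilon$, unless the current restricted function has already become constant (in which case $p_i \in \{0,1\}$ and we are done). The obstacle is that this ``$\ge \epsilon$'' is only useful while the current bias is below $\epsilon$: once $|p_i - \tfrac{1}{2}| \ge \epsilon$, Jensen makes $\E[|p_{i+1} - \tfrac{1}{2}| \mid \text{transcript}] \ge \epsilon$ automatic, so plain iteration stalls at bias $\approx \epsilon$. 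To get past this I plan to interleave, between two runs of $\mathcal{T}$, a product-preserving re-centering step that uses totality of the restricted function to bring its bias back below $\epsilon$ --- either by an extra block of obliviously chosen queries that re-balances the conditional, or by a convexity argument showing such a re-balancing restriction exists --- so that the next run of $\mathcal{T}$ again forces a genuine drop (of order $\epsilon^2$) in the conditional variance $\Var$ of the restricted function, or equivalently in its total influence $\sum_j \I_j$ under the current product distribution. A potential-function / optional-stopping argument, with potential one of $\min(p_i, 1-p_i)$, $\Var$, or $\sum_j \I_j$ of the restriction, then bounds the number of rounds needed to reach bias $1/6$.

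The quantitative heart --- and the step I expect to be the main obstacle --- is squeezing this down to $\widetilde{O}(k/\epsilon)$ total queries rather than the $\widetilde{O}(k/\epsilon^2)$-type bound a crude martingale or black-box boosting argument would give: one must show the per-round progress toward bias $1/6$ is $\widetilde{\Omega}(\epsilon)$, uniformly over all product $\mu$, not merely $\Omega(\epsilon^2)$. This is the only place the proof can really exploit totality --- it is what makes the product conditionals and the re-centering step well defined and supported everywhere --- and it is also where the re-centering queries have to be charged without losing the extra factor of $1/\epsilon$; I would expect to need a carefully tuned potential (or a smarter amplification schedule, e.g.\ increasing the target bias geometrically while keeping each stage's cost proportional only to $k$) to make the telescoping come out to $\widetilde{O}(k/\epsilon)$, mirroring the way Ben-David, Blais, G{\"o}{\"o}s and Maystre obtain the analogous $\frac{1}{\epsilon}$-type error reduction for their linearized measure. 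Finally the routine endgame: a constant restriction has bias $\tfrac{1}{2} \ge 1/6$; the number of distinct coordinates ever queried by $\mathcal{T}'$ is at most (number of rounds)$\,\times\,k = \widetilde{O}(k/\epsilon)$; and a union bound over rounds converts the per-round potential drop into the claimed average-bias lower bound for $\mathcal{T}'$ under every product distribution, which by definition gives $\Rprodnb{}(g) = \widetilde{O}(k/\epsilon) = \widetilde{O}\!\big(\tfrac{1}{\epsilon}\,\Rprodnb{\frac{1}{2}-\epsilon}(g)\big)$.
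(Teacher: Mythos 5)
Your proposal correctly identifies the central difficulty --- that the average-bias guarantee of $\mathcal{T}$ becomes vacuous by Jensen once the conditional bias reaches $\epsilon$, and that a naive martingale or boosting iteration would at best give $\widetilde{O}(k/\epsilon^2)$ --- but it resolves neither problem, and the two steps you lean on are exactly the ones that are missing. First, the ``product-preserving re-centering step'' is never constructed: the composed tree $\mathcal{T}'$ must be a single fixed object that works for \emph{every} product distribution simultaneously, whereas ``bring the conditional bias of the current restriction back below $\epsilon$'' is a $\mu$-dependent target, and you give no argument that an obliviously chosen block of queries (or even a $\mu$-dependent one of bounded depth) achieving it exists. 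Second, even granting re-centering, you only claim a per-round progress of order $\epsilon^2$, and you explicitly defer the upgrade to per-round progress $\widetilde{\Omega}(\epsilon)$ --- which is the entire quantitative content of the theorem --- to ``a carefully tuned potential'' that is never exhibited. As written the argument is a plan, not a proof.

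The paper's route is genuinely different: it works per input rather than per distribution, passing through the large-error sabotage complexity $\RS_{1-2\epsilon}$ (Lemma~\ref{lem:main1}). Applying the $\Rprodnb{\frac{1}{2}-\epsilon}$ guarantee to the two-point product distribution putting mass $1/2$ on each of $x$ and $x^{\oplus i}$, for $x_i$ sensitive, shows that the optimal tree queries each sensitive bit of each input with probability at least $2\epsilon$ (part~2 of the lemma). This is a Boolean event for each (input, sensitive bit) pair, so $O(\frac{1}{\epsilon}\log s(g))$ independent repetitions drive its failure probability below $1/s(g)^2$ --- this one-sided amplification is where the factor $1/\epsilon$ rather than $1/\epsilon^2$ comes from --- and a union bound over the at most $s(g)$ sensitive bits of an input shows that with probability $1-1/s(g)$ \emph{all} sensitive bits are queried. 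Equation~(\ref{eq:inf_avg_s}) and the Poincar\'e inequality then convert ``the restriction of $g$ to a leaf has almost no sensitive inputs under $\mu|_\ell$'' into ``the restriction has small influence, hence small variance, hence small bias,'' uniformly over all product distributions. The lesson for salvaging your approach is that totality has to be exploited through sensitivity and two-point product distributions, not merely through the fact that restrictions of total functions are total.
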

We remark here that from the definition of $\Rprodnb{\epsilon}$ it is not immediately clear that it admits any error-reduction at all.

To prove Theorems~\ref{thm:th2} and~\ref{thm:th3}, we define a version of sabotage complexity with errors, that we denote by $\RS_\epsilon$. Informally, $\RS_\epsilon(g)$ is the minimum number of randomized queries required to distinguish every pair of inputs with different function values with probability at least $1-\epsilon$ (see Section~
\ref{par:sabotage} for a formal definition). Let $s(g)$ denote the sensitivity of $g$ (see Section~\ref{sec:influence} for a formal definition). The following lemma constitutes the technical core of our proofs of Theorems~\ref{thm:th2} and~\ref{thm:th3}.
\begin{lemma}
\label{lem:main1}
For all total Boolean functions $g:\bin^n \to \bin$, and $\epsilon: \mathbb{N} \to (0, 1/2)$,
\begin{enumerate}
    \item $\R^{\product}(g)=O\left(\frac{1}{\epsilon(n)}\cdot \RS_{1-\epsilon(n)}(g) \log s(g)\right)$, and
    \item $\RS_{1-2\epsilon(n)}(g) \leq \R^{\product}_{\frac{1}{2}-\epsilon(n)}(g)$.
\end{enumerate}
\end{lemma}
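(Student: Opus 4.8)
I would prove the two items separately; since item~(2) is the easier one, I would start there.

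\smallskip
\noindent\textbf{Item (2): $\RS_{1-2\epsilon}(g)\le\R^{\product}_{\frac12-\epsilon}(g)$.} Take a randomized decision tree $T$ of complexity $d=\R^{\product}_{\frac12-\epsilon}(g)$ realizing that measure, so that for every product distribution $\mu$ the average bias of the leaves of $T$ is at least $\frac12-(\frac12-\epsilon)=\epsilon$. Given a pair $x,y$ with $g(x)\neq g(y)$, let $z\in\{0,1,*\}^n$ be the associated sabotage string ($z_i=x_i$ where $x_i=y_i$, else $z_i=*$). Walking from $x$ to $y$ by flipping the disagreeing coordinates one at a time exposes a ``boundary edge'': inputs $u,v$ differing in a single coordinate $i^*$ (a $*$-coordinate of $z$) with $g(u)\neq g(v)$. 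Now apply the hypothesis to the product distribution $\mu^*$ that is uniform on coordinate $i^*$ and deterministic on every other coordinate, so $\mathrm{supp}(\mu^*)=\{u,v\}$: in any deterministic tree in the support of $T$, a leaf whose path queries $i^*$ is reached by exactly one of $u,v$ and has bias $\tfrac12$, whereas a leaf whose path avoids $i^*$ is reached by both or by neither and has bias $0$; hence the average bias under $\mu^*$ is $\tfrac12\Pr[\,T\text{ queries }i^*\,]\ge\epsilon$, so $T$ queries $i^*$ with probability $\ge2\epsilon$ when run on $u$ (equivalently on $v$). The sabotage algorithm would be: run $T$ on $z$, answering each queried non-$*$ coordinate by its value, and output the first queried $*$-coordinate. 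Since $u$ and $z$ agree on all non-$*$ coordinates, the run of $T$ on $z$ coincides with its run on $u$ up to the first query of a $*$-coordinate; so whenever $T$ on $u$ queries $i^*$, the run on $z$ outputs a $*$-coordinate. This algorithm makes at most $d$ queries and succeeds with probability $\ge2\epsilon$ on every sabotage instance, giving item~(2).

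\smallskip
\noindent\textbf{Item (1): $\R^{\product}(g)=O(\tfrac1\epsilon\RS_{1-\epsilon}(g)\log s(g))$.} Put $D:=\RS_{1-\epsilon}(g)$. First I would amplify: running an optimal sabotage algorithm $O(\tfrac1\epsilon\log\tfrac1\delta)$ times with fresh randomness (checking each returned coordinate, which is free once the instance is known) gives, for any $\delta$, a sabotage algorithm $A_\delta$ of cost $O(\tfrac D\epsilon\log\tfrac1\delta)$ finding a $*$ with probability $\ge1-\delta$ on every non-monochromatic subcube; choosing $\delta=1/\mathrm{poly}(s(g))$ is what produces the $\log s(g)$ factor. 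Then I would build a single randomized decision tree $\mathcal T$ for $g$: maintain the subcube $C$ cut out by the coordinates queried so far; while $g|_C$ is non-constant, run $A_\delta$ on the sabotage instance attached to $C$ --- whose oracle answers are determined by $C$ and $g$ and hence cost \emph{no} real queries --- obtaining a coordinate $i$, query the real bit $z_i$, update $C$; stop and output the unique value of $g|_C$ once $g|_C$ becomes constant, or output an arbitrary value once a budget of $O(\tfrac1\epsilon D\log s(g))$ real queries is exhausted. The description of $\mathcal T$ mentions no distribution, so $\mathcal T$ is a single randomized tree of complexity at most the budget, and it then suffices to show that for every product distribution $\mu$, with probability $\ge\tfrac23$ over $z\sim\mu$ and the internal randomness the recursion reaches a monochromatic subcube within budget: on that event $\mathcal T$ outputs $g(z)$, so the error under $\mu$ with the majority labelling is $\le\tfrac13$, the average leaf bias is $\ge\tfrac16$, and $\R^{\product}(g)=\Rprodnb{1/3}(g)$ is at most the budget.

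\smallskip
\noindent\textbf{The main obstacle} is precisely this progress estimate, since a priori the number of rounds is bounded only by the number of relevant coordinates of $g$, which can be much larger than $\widetilde{O}(D/\epsilon)$. I would track along the recursion a potential such as $\Pr_{\mu|C}[g|_C=1]\cdot\Pr_{\mu|C}[g|_C=0]$ (or the expected sensitivity $\mathbf{E}_{x\sim\mu|C}[s(g|_C,x)]$): under a product $\mu$ conditioning on a queried bit is just a restriction, so concavity makes this a supermartingale, and its initial value is at most a constant (respectively $s(g)$) --- this is where totality of $g$ enters. One then wants each round to shrink the potential by a constant factor, and the difficulty is that $A_\delta$ only returns \emph{some} relevant coordinate, not one carrying a definite share of the remaining potential under the unknown $\mu$, so a naive round need not make multiplicative progress. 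I would try to circumvent this by calling the sabotage algorithm on a \emph{pruned} instance whose $*$-set is restricted to coordinates guaranteed to carry a fixed fraction of the current potential --- using totality to see the pruned instance is still non-monochromatic --- and by organizing the recursion into $O(\log s(g))$ phases, each spending $O(D/\epsilon)$ real queries and halving the potential; an alternative would be to forgo per-round progress and instead bound the expected running time of the untruncated recursion directly by a global charging argument against $D$. Making either route precise while keeping every call to the sabotage algorithm free of real queries is the technical heart of the proof.
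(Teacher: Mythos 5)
Your part (2) is correct and is essentially the paper's own argument. The paper first isolates your ``walk along the Hamming path to a boundary edge'' step as Lemma~\ref{lem:rs_and_sens_ip} (distinguishing every $0$/$1$ pair with probability $2\epsilon$ is equivalent to querying every sensitive bit of every input with probability $2\epsilon$), and then applies exactly your two-point product distribution supported on a sensitive edge; the bookkeeping with the min-bias of the leaves matches yours.

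Part (1) has a genuine gap, and it sits exactly where you locate it: the ``progress estimate'' for your adaptive certificate-finding recursion is never proved, and the routes you sketch for it (a supermartingale potential, pruned sabotage instances, $O(\log s(g))$ phases, a global charging argument) are candidate strategies rather than arguments. As written, the number of rounds of $\mathcal T$ is bounded only by the number of relevant variables, which can far exceed the budget. The construction itself also has unresolved choices: a subcube $C$ does not determine a canonical pair $x\in g^{-1}(0)$, $y\in g^{-1}(1)$ inside it, the oracle answers on coordinates where the chosen pair agrees but which are not fixed by $C$ are not determined by $C$, and the coordinate returned by the sabotage call need not carry any mass under the unknown $\mu$ --- which is precisely why no per-round multiplicative progress is available.

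The paper's proof of part (1) avoids iteration entirely, and I would rework your argument along its lines. Starting from Lemma~\ref{lem:rs_and_sens_ip}, amplify the sabotage algorithm $O(\frac1\epsilon\log s(g))$ times so that for every input $x$ each sensitive bit fails to be queried with probability at most $1/s(g)^2$; a union bound over the at most $s(g)$ sensitive bits of $x$ shows that, except with probability $1/s(g)$, \emph{all} sensitive bits of $x$ are queried. This single amplified tree $\mathcal R'$ is the whole algorithm. If an input $x$ reaches a leaf $\ell$ with all its sensitive bits queried, then $s(g|_\ell,x)=0$, so for any product $\mu$ the average sensitivity of $g|_\ell$ under $\mu|_\ell$ (again a product distribution, since $\ell$ is a subcube) is at most $s(g)$ times the probability of the bad event. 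Average sensitivity dominates influence, Poincar\'e's inequality converts influence into variance, and variance dominates the min-bias of the leaf; averaging over leaves gives total error at most $\frac18\cdot s(g)\cdot\frac1{s(g)}<\frac13$. This is where totality and the product structure are actually used --- not to control a recursion, but to make ``zero sensitivity at a point'' translate, on average over a product measure, into ``nearly constant on the leaf.''
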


\noindent Is the bound in Theorem~\ref{thm:th1} tight? Our next result gives a negative answer to this question. We show that for the NAND tree function (defined shortly), $\RS$ and $\Dprodnb{}$ are polynomially separated. Consider a complete binary tree of depth $d$. Each leaf is labelled by a distinct Boolean variable. Each internal node is a binary NAND gate. For each input, the evaluation of this Boolean formula is the output of the NAND tree function, that we denote by $g_d$.
\begin{theorem}
\label{thm:th4}
$\Dprodnb{}(g_d)=O(\RS(g_d)^{1-\Omega(1)})$.
\end{theorem}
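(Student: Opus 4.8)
The plan is to establish the two ingredients separately: an upper bound $\Dprodnb{}(g_d) = O(2^{d(1-\Omega(1))})$ and a lower bound $\RS(g_d) = \Omega(2^d)$, and then observe that the number of variables is $n = 2^d$, so these two bounds are polynomially separated as claimed. Since $\RS(g) = \widetilde\Omega(\Dprodnb{}(g))$ by Theorem~\ref{thm:th1}, only the reverse (strict, polynomial) separation needs work.

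For the lower bound $\RS(g_d) = \Omega(2^d)$, I would exhibit a hard distribution over sabotaged inputs. The natural choice is to take the ``equilibrium'' input distribution for the NAND tree — the one for which each gate's two children are independent and each takes the value that makes the parent's behavior balanced — and then perturb a single leaf to create a $\star$. Because the NAND tree is read-once and the equilibrium distribution is a product distribution on the leaves, the standard adversary/eavesdropping argument (as in Santha's lower bound, or the Saks--Wigderson-style analysis) shows that any randomized decision tree that locates a discrepancy at a $\star$-leaf must, in expectation, read $\Omega(2^d)$ leaves: at each level the algorithm is forced to explore a constant fraction of the subtrees because, conditioned on what it has seen, the ``pivotal'' subtree is roughly uniformly distributed among the children. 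Formally I would set this up as a distributional lower bound on $\RS$ via a minimax/Yao argument over sabotaged-input distributions, then recursively bound the expected number of queries by a recurrence of the form $T(d) \ge c\cdot T(d-1)$ with $c > 1$, yielding $T(d) = \Omega(2^{\alpha d})$ for some $\alpha$; with care in choosing the equilibrium probabilities one gets $\alpha = 1$, i.e.\ $\RS(g_d) = \Omega(2^d) = \Omega(n)$, which is the known tight randomized bound — this is the part the abstract advertises as an alternative derivation of Santha's theorem.

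For the upper bound, the key point is that $\Dprodnb{}$ only needs to handle \emph{product} distributions, which are much weaker adversaries than the equilibrium distribution above (which, crucially, is a product distribution on the \emph{leaves} but the relevant hardness comes from the correlations induced through the tree — wait, it \emph{is} a product distribution, so I need a genuinely different argument). Here I would argue that for any product distribution $\mu$ on the $2^d$ leaves, there is a \emph{deterministic} tree of depth $O(2^{d(1-\Omega(1))})$ that errs with probability at most $1/3$. The idea: under a product distribution, each gate's output is determined by independent biased coins, so one can compute, bottom-up, for every node $v$ the probability $p_v$ that its subformula evaluates to $1$. A deterministic algorithm then does a pruned evaluation: at each NAND gate it first recursively evaluates whichever child is ``more likely to short-circuit'' the gate, and only recurses into the second child when the first child's value does not already determine the gate. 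Charging the expected cost of this greedy read-once evaluation against the product distribution gives a recurrence $C(d) \le C(d-1) + q\cdot C(d-1)$ where $q$ is the (product-distribution-controlled) probability that the first child fails to short-circuit; if $q$ is bounded away from $1$ this gives $C(d) = O((1+q)^d) = O(2^{(1-\Omega(1))d})$. The adversary's best product distribution tries to push $q \to 1$, but then I would show the two children's biases are pinned near specific values, at which point a small number of \emph{samples}/queries at the top levels lets the algorithm detect which regime it is in and abort early — or, more cleanly, show directly that no product distribution can make $q$ close to $1$ at \emph{every} level simultaneously, because the bias of a NAND gate's output is a fixed function of its children's biases and iterating this map drives the biases toward the unstable fixed point only on a measure-zero set, so a product distribution with leaf-biases bounded away from that fixed point incurs a geometric saving.

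The main obstacle I anticipate is the upper bound, specifically pinning down the right ``greedy short-circuit'' evaluation and proving the recurrence constant $(1+q) < 2$ holds \emph{uniformly over all product distributions} — one must rule out adversarial product distributions that are tailored level-by-level to keep the algorithm from short-circuiting. I expect the resolution to hinge on the dynamics of the one-dimensional map $p \mapsto 1-p^2$ (the NAND-gate bias map under independence): its only fixed point in $(0,1)$ is $p^\star = (\sqrt5-1)/2$, and it is repelling, so for a product distribution the per-level biases cannot all sit at $p^\star$; wherever they deviate, the algorithm gains a constant-factor-per-level advantage, which compounds to the polynomial separation. Making this quantitative — turning ``the biases can't all be $p^\star$'' into a concrete exponent $\Omega(1)$ in $d(1-\Omega(1))$ — will be the technical heart of the argument, but it is exactly the kind of read-once recursion that has a clean closed-form analysis.
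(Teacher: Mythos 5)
There is a genuine gap, and it is in your very first step: the claimed lower bound $\RS(g_d)=\Omega(2^d)$ is false. The zero-error (and bounded-error) randomized query complexity of the depth-$d$ NAND tree is $\Theta(\alpha^d)$ with $\alpha=\frac{1+\sqrt{33}}{4}\approx 1.686$, i.e.\ $\Theta(n^{0.7536\ldots})$ in the number of leaves $n=2^d$ --- not $\Theta(n)$, as you assert when you call $\Omega(2^d)$ ``the known tight randomized bound.'' Since a zero-error algorithm must have queried, before terminating on $x$, a certificate that separates $x$ from every input with the opposite function value, sabotage complexity is bounded above by zero-error randomized complexity; hence $\RS(g_d)=O(\alpha^d)=o(2^d)$ by Saks--Wigderson, and no hard distribution can give you $\Omega(2^d)$. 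The correct (and tight) target is $\RS(g_d)=\Omega(\alpha^d)$, which the paper obtains from a coupled distribution on pairs in $g_d^{-1}(0)\times g_d^{-1}(1)$ and a recurrence that tracks $0$-answered and $1$-answered queries separately; the recursion $Q(t+1,0)\ge Q(t,1)$, $Q(t+1,1)\ge 2Q(t,0)+\frac{1}{2}Q(t,1)$ has growth rate exactly $\alpha$. Your ``eavesdropping at the equilibrium distribution'' sketch does not supply this bookkeeping, and any constant $c>1$ in a one-level recurrence $T(d)\ge c\,T(d-1)$ must satisfy $c\le\alpha$.

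The error propagates into your upper bound: once the lower bound is only $\Omega(\alpha^d)$, proving $\Dprodnb{}(g_d)=O(2^{(1-\Omega(1))d})$ is not enough, since e.g.\ $2^{0.9d}\gg\alpha^d$. You must beat the Saks--Wigderson exponent itself, i.e.\ show $\Dprodnb{}(g_d)=O((\alpha-\delta)^d)$ for some constant $\delta>0$. Your greedy ``evaluate first the child more likely to short-circuit'' algorithm is indeed the paper's Algorithm~\ref{algo:nand-distn}, and your intuition that an adversarial product distribution cannot keep the short-circuit probability unfavourable everywhere is correct in spirit, but the one-level recurrence $C(d)\le(1+q)C(d-1)$ with ``$q$ bounded away from $1$'' only beats $2^d$, whereas you need the per-level factor below $\alpha\approx1.686$. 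The paper gets this by analysing two consecutive levels jointly: with $\alpha_\ell$ the larger probability that a grandchild subtree evaluates to $0$, productness gives $p_\ell=(1-\alpha_\ell)(1-\alpha_r)\ge(1-\alpha_\ell)^2$, leading to $T(d)\le(2-\alpha_\ell)(1+2\alpha_\ell-\alpha_\ell^2)\,T(d-2)$, and $\max_{x\in[0,1]}(2-x)(1+2x-x^2)=\frac{2}{27}(17+7\sqrt{7})\approx 2.63<\alpha^2\approx 2.84$. Without this two-level optimization (or an equally quantitative substitute for your fixed-point heuristic about $p\mapsto 1-p^2$), the sketch does not reach the exponent the theorem requires.
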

Saks and Wigderson \cite{SW86} showed that the zero-error randomized query complexity of $g_d$ is $\Theta(\alpha^d)$ for $\alpha=\frac{1+\sqrt{33}}{4}$.  Later Santha \cite{S91} showed that $\R(g_d)=\Theta(\alpha^d)$. We prove Theorem~\ref{thm:th4}  in two parts. First, we show an upper bound of $O((\alpha-\Omega(1))^d)$ on $\Dprodnb{}(g_d)$.
\begin{lemma}
\label{lem:main21}
There exists a constant $\delta >0$ such that $\Dprodnb{}(g_d)=O((\alpha-\delta)^d)$.
\end{lemma}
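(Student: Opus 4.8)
The plan is to reduce the statement to a bound on the \emph{expected} query cost of a natural zero-error recursive algorithm for $g_d$ under product distributions, and then to control the resulting recursion by a one-line potential function. Fix a product distribution $\mu$ on the $2^d$ leaves, and let $T_\mu$ be the deterministic decision tree that evaluates $g_d$ recursively: at a NAND gate with children $c_1,c_2$, first recursively evaluate the child $c$ maximizing $\Pr_\mu[c=0]$ (ties broken arbitrarily); if it evaluates to $0$, output $1$ for the gate, otherwise recursively evaluate the other child and output its negation. Then $T_\mu$ is always correct, and I will bound its expected depth $E_d(\mu):=\E_{x\sim\mu}[\mathrm{depth}(T_\mu,x)]$ by $O(\rho^d)$ for a constant $\rho<\alpha$ independent of $\mu$. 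Given such a bound, a Markov/truncation argument finishes the proof: the tree that simulates $T_\mu$ but halts and outputs $0$ once it has made $3E_d(\mu)$ queries has worst-case depth $O(\rho^d)$ and, by Markov's inequality applied to $\mathrm{depth}(T_\mu,\cdot)$, disagrees with $g_d$ with probability at most $1/3$ over $\mu$; hence $\D^\mu(g_d)=O(\rho^d)$, and therefore $\Dprodnb{}(g_d)=O(\rho^d)=O((\alpha-\delta)^d)$ with $\delta=\alpha-\rho>0$. So it suffices to exhibit such a $\rho$.

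The key consequence of $\mu$ being a product distribution is that the two subtrees below any gate are independent, so conditioning the second child on the value of the first does not change its law. Writing $q_v:=\Pr_\mu[v=0]$ and $E_k(\mathcal D)$ for the expected depth of $T_\mu$ restricted to a depth-$k$ subtree carrying the induced product law $\mathcal D$ (leaves cost $1$), independence gives the exact recursion
\[
E_k(\mathcal D_v)=E_{k-1}(\mathcal D_{c_1})+(1-q_{c_1})\,E_{k-1}(\mathcal D_{c_2}),
\]
where $c_1$ is the child chosen by $T_\mu$, so $q_{c_1}\ge q_{c_2}$, together with the propagation rule $q_v=(1-q_{c_1})(1-q_{c_2})$ (a NAND gate is $0$ exactly when both its independent children are $1$). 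Here is the tension that should force the rate below $\alpha$: the adversary picking $\mu$ wants every bias $q_{c_1}$ met in the recursion to be small so the discount factors $1-q_{c_1}$ stay near $1$; but small children-biases push $q_v$ close to $1$, which makes that very gate cheap to dispose of (it short-circuits with probability near $1$) when it serves as a child one level up.

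To quantify this I would use the potential $\Phi(v):=E(\mathcal D_v)\,\psi(q_v)$ with the decreasing affine weight $\psi(q):=1-cq$ for a suitable constant $c\in(0,1)$. Plugging $E(\mathcal D_{c_i})=\Phi(c_i)/\psi(q_{c_i})\le P/\psi(q_{c_i})$, $P:=\max_i\Phi(c_i)$, into the recursion and writing $x:=q_{c_1}\ge y:=q_{c_2}$ gives
\[
\Phi(v)\ \le\ P\cdot\psi\big((1-x)(1-y)\big)\left(\frac{1}{\psi(x)}+\frac{1-x}{\psi(y)}\right),
\]
so it is enough to verify the elementary inequality
\[
\big(1-c(1-x)(1-y)\big)\left(\frac{1}{1-cx}+\frac{1-x}{1-cy}\right)\ \le\ \rho\qquad\text{for all }0\le y\le x\le 1
\]
for some constant $\rho<\alpha$. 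On the diagonal $x=y$ the left side is $(1-c(1-x)^2)\tfrac{2-x}{1-cx}$, and at $x=\tfrac{3-\sqrt5}{2}$, where $(1-x)^2=x$, it equals $2-x=\tfrac{1+\sqrt5}{2}<\alpha$ for every $c$ — this is the extremal behaviour and corresponds to the ``balanced'' product distribution in which every gate has bias $\tfrac{3-\sqrt5}{2}$. Granting the inequality, $\Phi$ grows by a factor at most $\rho$ per level, so $\Phi(\mathrm{root})\le\rho^d\max_{\ell}\Phi(\ell)\le\rho^d$ (since $\Phi(\ell)=\psi(q_\ell)\le 1$), and as $\psi\ge 1-c$ we get $E_d(\mu)\le\rho^d/(1-c)=O(\rho^d)$, which closes the argument via the first paragraph.

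The main obstacle is the verification of the two-variable inequality with a well-chosen $c$: it is a bounded calculus problem, but it has to be carried out with care, in particular at the boundary values $x\in\{0,1\}$ and $y=0$ (which correspond to some subtree being constant under $\mu$; there one checks directly that the short-circuiting inside a constant NAND subtree of depth $k$ keeps its evaluation cost down to $O(\sqrt2^{\,k})$, comfortably below $\rho^k$). A concrete choice such as $c=0.3$ already appears to give supremum below $1.64<1.686\approx\alpha$, so the gap $\delta$ is genuine; optimizing $c$ (or using a nonlinear $\psi$) pushes $\rho$ toward $\tfrac{1+\sqrt5}{2}$, though that is not needed. One last point to record is that $T_\mu$ is a bona fide deterministic decision tree for each $\mu$: its child-selection rule uses only the gate biases $q_v$, which are determined by the leaf marginals and can be precomputed before any query is made.
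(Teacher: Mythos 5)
Your proposal is correct and, at the level of the algorithm, identical to the paper's: both evaluate, at each NAND gate, the child that is more likely to be $0$ under the induced product marginal, use independence of the two subtrees to write the recursion $E_v=E_{c_1}+(1-q_{c_1})E_{c_2}$ together with the propagation rule $q_v=(1-q_{c_1})(1-q_{c_2})$, and pass from expected to worst-case query count by Markov/truncation. Where you genuinely diverge is in how the recursion is solved. The paper unrolls \emph{two} levels, disposes of the cross terms by a WLOG ordering of the grandchildren's biases, and reduces to maximizing the univariate cubic $(2-x)(1+2x-x^2)$ on $[0,1]$, whose maximum $\tfrac{2}{27}(17+7\sqrt7)\approx 2.631$ is strictly below $\alpha^2$. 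You instead run a \emph{one}-level potential argument with $\Phi(v)=E_v\,(1-cq_v)$ and must verify the two-variable inequality you state. That inequality does hold, and more easily than you fear: in your notation the $y$-derivative of the left-hand side is $c(1-x)\bigl[B+A/(1-cy)^2\bigr]\ge 0$, so the expression is nondecreasing in $y$ and its supremum over $0\le y\le x\le 1$ is attained on the diagonal $y=x$, where it becomes $(1-c(1-x)^2)(2-x)/(1-cx)$; for $c=0.3$ the maximum of this univariate function is $\approx 1.633$ (near $x\approx 0.52$), comfortably below $\alpha=(1+\sqrt{33})/4\approx 1.686$. In particular your worry about the boundary cases $x\in\{0,1\}$ and $y=0$ is unfounded --- the inequality already covers them (giving $2(1-c)$ and $1/(1-c)$ respectively), so no separate treatment of constant subtrees is needed. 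Your route buys a cleaner induction (one level at a time, no case analysis over orderings of four biases) at the cost of a marginally worse rate ($\rho\approx1.633$ versus the paper's $\approx1.622$); both yield the lemma with an explicit $\delta>0$.
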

Works by Pearls \cite{P82} and Tarsi \cite{T13} showed that there exists a constant $\eta>0$ such that for all distributions $\mu$ where each variable is set to $1$ independently with some probability $p$, $\D^\mu(g_d)=O((\alpha-\eta)^d)$. In Lemma~\ref{lem:main21} we bound $\D^\mu(g_d)$ for any product distribution $\mu$. Our bound is quantitatively weaker than those by Pearls \cite{P82} and Tarsi \cite{T13}, and we do not comment on its tightness.

Lemma~\ref{lem:main21} also gives an explicit polynomial separation between $\R$ and $\Dprodnb{}$ which, to our knowledge, was not known prior to our work.

Next, we prove a tight lower bound on $\RS(g_d)$. As a by-product, our proof of the following lemma yields a different proof of the bound $\R(g_d)=\Omega(\alpha^d)$ from the one by Santha \cite{S91}, and may be of independent interest.
\begin{lemma}
\label{lem:main22}
$\RS(g_d)=\Omega(\alpha^d)$.
\end{lemma}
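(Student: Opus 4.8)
The plan is to run the Saks--Wigderson/Santha style adversary argument, but for the sabotage problem rather than for evaluation. By the standard (minimax) characterization it suffices to produce a distribution $\mathcal{D}_d$ supported on sabotage instances of $g_d$ --- partial inputs in $\{0,1,*\}^{2^d}$ consistent with both a $0$-input and a $1$-input that carry a \emph{single} $*$ --- so that every deterministic decision tree needs $\Omega(\alpha^d)$ queries in expectation before it queries the $*$-coordinate; this is equivalent, up to constants and uniformly in the error parameter, to a lower bound on $\RS(g_d)$.

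I would define $\mathcal{D}_d$ recursively, together with two auxiliary distributions $\mathcal{C}^1_d$ and $\mathcal{C}^0_d$ on genuine (star-free) inputs to $g_d$ that evaluate to $1$ and to $0$ respectively. Let $\mathcal{D}_0$ be the single starred leaf. To sample $\mathcal{D}_d$: pick a uniformly random child of the root, place there an independent copy of $\mathcal{D}_{d-1}$, and fill the sibling subtree with an independent copy of $\mathcal{C}^1_{d-1}$. Let $\mathcal{C}^1_d$ (resp.\ $\mathcal{C}^0_d$) be $\mathcal{D}_d$ with its unique $*$ filled in by the bit forcing $g_d$ to $1$ (resp.\ $0$). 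Unwinding the definitions, $\mathcal{C}^1_d$ is the root over one copy of $\mathcal{C}^0_{d-1}$ and one of $\mathcal{C}^1_{d-1}$ (random which), $\mathcal{C}^0_d$ is the root over two i.i.d.\ copies of $\mathcal{C}^1_{d-1}$, the unique $*$ of $\mathcal{D}_d$ sits at a uniformly random leaf all of whose path-siblings evaluate to $1$ (so the instance is valid), and --- the key point --- $\mathcal{D}_d,\mathcal{C}^0_d,\mathcal{C}^1_d$ can be coupled to agree on every coordinate except this one special leaf. Hence no algorithm can distinguish $\mathcal{D}_d$ from $\mathcal{C}^0_d$ or $\mathcal{C}^1_d$ before querying the special leaf, and on $\mathcal{D}_d$ that event is precisely ``success''.

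The proof then rests on a joint recurrence. Writing $E^b(d)$ for the query cost incurred, under the optimal algorithm for $g_d$ on the hard mixture $\tfrac12\mathcal{C}^0_d+\tfrac12\mathcal{C}^1_d$, conditioned on the input being drawn from $\mathcal{C}^b_d$, the standard adversary analysis at a NAND gate gives $E^0(d)\ge 2E^1(d-1)$ (a $0$-value forces both, independent, children to be certified to $1$) and $E^1(d)\ge E^0(d-1)+\tfrac12 E^1(d-1)$ (a $1$-value needs one $0$-child certified, and with probability $\tfrac12$ the algorithm first commits to the $1$-child, which by the indistinguishability above it cannot abandon before $\Omega(E^1(d-1))$ queries). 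These are exactly the Saks--Wigderson recurrences, with characteristic equation $\lambda^2=\tfrac12\lambda+2$ and root $\alpha$, so $E^0(d),E^1(d)=\Omega(\alpha^d)$, which already re-derives $\R(g_d)=\Omega(\alpha^d)$ via minimax. For the sabotage cost $S(d)$ under $\mathcal{D}_d$, the algorithm must reach the single $*$, which lies in the recursive sub-instance in one uniformly random child of the root; with probability $\tfrac12$ it first enters the sibling --- an honest star-free $\mathcal{C}^1_{d-1}$ input --- which it again cannot rule out before $\Theta(\alpha^{d-1})$ queries, so $S(d)\ge S(d-1)+\tfrac12\cdot\Theta(\alpha^{d-1})$, and therefore $S(d)=\Omega(\alpha^d)$ since $\alpha>1$.

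I expect the main obstacle to be exactly where the classical $\R$ lower bounds are also delicate: making the adversary argument rigorous by ruling out \emph{adaptive, interleaved} strategies --- showing that an algorithm gains nothing by abandoning the ``wrong'' child after only a few queries, and nothing by querying both children in parallel (a back-of-the-envelope check confirms the ``parallel'' strategy costs $\approx 2S(d-1)\approx(2/\alpha)\alpha^d$, strictly worse than the bound obtained above). This is handled by phrasing the recurrences in terms of a carefully chosen per-distribution cost and tracking it through the first query and its response; the coordinate-wise agreement of $\mathcal{D}_d,\mathcal{C}^0_d,\mathcal{C}^1_d$ is precisely what makes such an argument go through here. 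The remaining step --- passing from the distributional expected-query lower bound to the stated bound on $\RS(g_d)$ --- is routine from the minimax and amplification facts for sabotage complexity.
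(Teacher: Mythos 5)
Your high-level plan---a recursively defined hard distribution built from the Saks--Wigderson ``reluctant'' inputs, plus a pair of coupled recurrences whose characteristic root is $\alpha$---matches the paper's strategy, and your numerology is right (both your $E^0,E^1$ system and the paper's system reduce to $\lambda^2=\tfrac12\lambda+2$). The distribution is also essentially the paper's $\Prb_d$ up to the choice of coupling (you couple the $0$- and $1$-marginals so that they differ in a single sensitive coordinate; the paper couples them so that they may differ in many coordinates, which is equally valid for sabotage). The problem is that the proposal leaves unproven exactly the step that constitutes the technical core of the lemma: the claims that an adaptive algorithm ``cannot abandon'' the $1$-child before paying $\Omega(E^1(d-1))$, and ``cannot rule out'' the star-free sibling before $\Theta(\alpha^{d-1})$ queries. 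Nothing forces the algorithm to certify the sibling: it can interleave a few queries into each child, condition on the answers, and redirect its effort, and the expected cost it sinks into the ``wrong'' child is not obviously proportional to the full evaluation cost of that child's distribution. You correctly identify this as the main obstacle and assert that it ``is handled by phrasing the recurrences in terms of a carefully chosen per-distribution cost,'' but that choice of cost and the argument tracking it through an arbitrary adaptive strategy \emph{are} the proof, and they are absent. As written, $E^1(d)\ge E^0(d-1)+\tfrac12E^1(d-1)$ and $S(d)\ge S(d-1)+\tfrac12\Theta(\alpha^{d-1})$ are asserted, not derived. (A secondary gap of the same flavour: relating the behaviour of the level-$d$ optimal algorithm restricted to a child subtree to the level-$(d-1)$ optimum also requires an argument.)

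It is worth noting how the paper avoids this trap, because the mechanism is genuinely different from your top-down adversary sketch. The paper works bottom-up: from the given zero-error algorithm $\mathcal{R}=\A_d$ it constructs a chain $\A_{d-1},\ldots,\A_0$, where $\A_t$ simulates $\A_{t+1}$ on a random two-bit gadget encoding of its input. The recurrence is then derived \emph{locally, gadget by gadget}: for each position $i$ one compares the probability that $\A_t$ queries $x_i$ before the stopping time with the expected number of $b'$-valued bits of $\{u_i^{(0)},u_i^{(1)}\}$ that $\A_{t+1}$ queries, and a short case analysis over the only two structurally distinct decision trees on two variables bounds this ratio in every case. Crucially, the two tracked quantities are the numbers of $0$-\emph{answers} and $1$-\emph{answers} received up to the stopping time ($Q(t,0)$ and $Q(t,1)$), not costs conditioned on the \emph{value} of a subformula as in your $E^0,E^1$. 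This answer-conditioned decomposition, combined with the per-gadget analysis, is what absorbs arbitrary adaptive interleaving without ever having to argue about which child the algorithm ``commits'' to. To complete your version you would need a Santha-style potential/induction argument in place of that step; until then the proposal is a correct plan with the decisive lemma missing.
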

Lemma~\ref{lem:main22}, together with the upper bound by Saks and Wigderson, shows that $\RS(g_d)=\Theta(\R(g_d))$. From the composition theorem proven by Ben-David and Kothari, we thus have that for all relations $f$, $\R(f \circ g_d)=\Theta(\R(f)\cdot\R(g_d))$.

Lemmata~\ref{lem:main21} and~\ref{lem:main22} immediately imply Theorem~\ref{thm:th4}.
\subsection{Proof ideas}
In this section, we sketch the ideas and techniques that have gone into the proofs of our results. We begin with Lemma~\ref{lem:main1}, from which Theorems~\ref{thm:th2} and~\ref{thm:th3} follow. We then discuss Lemmata~\ref{lem:main21} and~\ref{lem:main22}, from which Theorem~\ref{thm:th4} follows.
\paragraph*{Lemma~\ref{lem:main1}}
We first discuss part 2, which is easier. Note that if a randomized algorithm $\mathcal{R}$ decides $g$ on each input with error probability $\frac{1}{2}-\epsilon$, then by a union bound two simultaneous runs of $\mathcal{R}$ on $x \in g^{-1}(0), y \in g^{-1}(1)$ decide both $g(x)$ and $g(y)$ with error probability $1-2\epsilon$. This implies that $\mathcal{R}$ distinguishes $x$ and $y$ with error probability $1-2\epsilon$.

Now we turn to part 1. This step needs arguments involving sensitivity and influence of Boolean functions, that are defined and discussed in Section~\ref{sec:influence}. The first step is showing that distinguishing each pair of inputs with high confidence is equivalent to reading each sensitive bit of each input with the same confidence (Lemma~\ref{lem:rs_and_sens_ip}). Using this, by a sequence of arguments involving standard error-reduction, we infer that there is a randomized tree $\mathcal{R}$ of complexity $O\left(\frac{1}{\epsilon(n)}\cdot\RS_{1-\epsilon(n)}(g)\log s(g)\right)$ that, for every input, with probability $1-\frac{1}{s(g)}$, queries all its sensitive bits. This translates to the claim that the average influence of the restrictions of $g$ to the leaves of $\mathcal{R}$ is low. Poincar\'{e} inequality (Lemma~\ref{lem:poincare}) now lets us conclude that that the average bias of those restrictions is small, yielding the lemma.
\paragraph*{Lemma~\ref{lem:main21}}
Saks and Wigderson gave a zero-error recursive algorithm for $g_d$. Their algorithm recursively evaluates a randomly chosen child of the root. If that child evaluates to $0$, the algorithm outputs $1$ and terminates. Else, the algorithm recursively evaluates the other child and outputs the complement.

If the output of the function is $0$, then the algorithm will be forced to evaluate both children. However, if the output is $1$, then the algorithm avoids evaluating one of the children with probability $1/2$.

We observe that if the inputs are sampled from a product distributions, then firstly, the output will not always be $0$; so we will always have scope to avoid evaluating one child. Secondly, we will also have both children evaluating to $0$ with positive probability, in which case we are guaranteed to save evaluating one child.

We modify the algorithm by Saks an Wigderson to tap these opportunities; in each step we query the child which is more likely to evaluate to $0$. Note that this requires knowledge of the distribution. We look at two successive levels of the tree and show that the above considerations bring us significant advantage over the algorithm by Saks and Wigderson.
\paragraph*{Lemma~\ref{lem:main22}}
As mentioned before, here we work with the original definition of sabotage complexity. Our proof splits into the following steps.\begin{enumerate}
\item We recursively define a `hard' distribution $\Prb_d$ over pairs in $g_d^{-1}(0) \times g_d^{-1}(1)$.
\item We consider an arbitrary zero-error randomized algorithm $\mathcal{R}$ for $g_d$. We now wish to give a lower bound on the number of queries it makes on expectation to distinguish a random pair sampled from $\Prb_d$.
\item Using $\mathcal{R}$, we recursively define a sequence of algorithms $\A_d, \A_{d-1}, \ldots, \A_0$ such that for each $i$, $\A_i$ is a zero-error algorithm for $g_i$.
\item We establish a recursive relation amongst the expected number of queries that $g_i$ makes to distinguish a pair sampled from $\Prb_i$, for various $i$. We make a distinction between queries based on their answers ($0$ or $1$). This step involves a small case analysis involving all deterministic trees with two variables.
\item We finish by solving the recursion established in the previous step.
\end{enumerate}
\section{Preliminaries}
\label{sec:prelims}
The notation $[n]$ denotes the set $\{1, \ldots, n\}$. Throughput the paper, $g:\bin^m \to \bin$ will stand for a Boolean function and $x=(x_1,\ldots, x_m)$ will stand for a generic input to $g$. For $b \in \bin$, $f^{-1}(b)=\{x \in \bin^n \mid f(x)=b\}$. For a subset $S$ of $\bin^m$, let $f\mid_S$ denote the restriction of $f$ to $S$.
A probability distribution $\mu$ over $\bin^m$ is a function $\mu:\bin^m \to [0,1]$ such that $\sum_{x \in \bin^m} \mu(x)=1$. For a subset $S$ of $\bin^m$, define $\mu(S):=\sum_{x\in S} \mu(x)$. For a subset $S$ of $\bin^m$ such that $\mu(S)>0$, $\mu|_S$ is the distribution obtained by conditioning $\mu$ on the event that the sample belongs to $S$. In other words:
\[\mu|_S(x)=\left\{\begin{array}{ll}0 & \mbox{if } x \notin S, \\ \frac{\mu(x)}{\mu(S)} & \mbox{if } x \in S\end{array}\right.\]$\mu$ is said to be a product distribution if there exist $p_1, \ldots, p_m \in [0, 1]$ such that for each $x \in \bin^n$, $\mu(x)=\prod_{i=1}^m (x_ip_i + (1-x_i)(1-p_i))$. In other words, each $x_i$ is independently equal to $1$ with probability $p_i$ and $0$ with probability $1-p_i$. Let \textsf{PROD} be the set of all product distributions of $\bin^m$. 

For a subset $I \in [m]$ of indices, $x^{\oplus I}$ denotes the string obtained from $x$ by flipping the variables $x_i$ for each $i \in I$. If $I=\{i\}$, we abuse notation and write $x^{\oplus i}$.
\begin{definition}[Subcube]
\label{def:subcube}
A subset $C$ of $\bin^m$ is called a subcube if there exists a set $S \in [m]$ of indices and bits $\{b_i \mid i \in S\}$ such that $C=\{x \in \bin^m\mid \forall i \in S, x_i=b_i\}$. The co-dimension of $C$ is defined to be $|S|$.
\end{definition}
\subsection{Decision trees for Boolean functions}
\label{sec:decitree}
A decision tree for $m$ variables is a binary tree $T$. Each internal node of $T$ is labelled by a variable $x_i$ for $i \in [m]$, and has two children that corresponds to $x_i=0$ and $x_i=1$. Each leaf is labelled by $0$ or $1$. A decision tree is evaluated on a given input $x=(x_1,\ldots,x_m)$, as follows.  Start at the root. In each step, if the current node is an internal node, then query its label $x_i$. Then navigate to that child of the current node that corresponds to the value of $x_i$. The computation stops when it reaches a leaf, and outputs the label of the leaf. Let $T(x)$ denotes the output of the tree at $x$.

The inputs $x$ that take the tree $T$ to leaf $\ell$ is exactly the ones which agree with the path from the root to $\ell$ for every variable queried on the path. Thus, the set of such inputs is a subcube of $\bin^m$ of co-dimension equal to the depth of $\ell$. The notation $\ell$ will also refer to the subcube corresponding to the leaf $\ell$.  

$T$ is said to compute $g:\bin^m \to \bin$ if 
\[\forall x \in \bin^m, T(x)=g(x).\]The \emph{Deterministic Decision Tree complexity of $g$}, denoted by $\D(g)$ is the minimum depth of a decision tree that computes $f$.

Let $\mu$ be a distribution over $\bin^m$. For a given error parameter $\epsilon \in [0, 1/2]$, $T$ computes $g$ with error probability $\epsilon$ over $\mu$ if
\[\Pr_{x \sim \mu}[g(x)\neq T(x)]\leq \epsilon.\]
The \emph{distributional query complexity of $g$ for error $\epsilon$ with respect to $\mu$}, denoted by $\D^\mu_\epsilon(f)$, is the minimum depth of a decision tree that computes $f$ with error probability $\epsilon$ over $\mu$.

A randomized decision tree is a probability distribution $\mathcal{R}$ over deterministic decision trees. $\mathcal{R}$ is said to compute $g$ with error probability $\epsilon$ if
\[\forall x \in \bin^m, \Pr_{T \sim \mathcal{R}}[T(x) \neq g(x)]\leq \epsilon.\]
The query complexity of $\mathcal{R}$ is the maximum depth of any decision tree in its support. The r\emph{randomized query complexity of $g$ for error $\epsilon$}, denoted by $\R_\epsilon(g)$, is the minimum query complexity of any randomized decision tree $\mathcal{R}$ that computes $g$ with error $\epsilon$. Define $\R(g):=\R_{1/3}(g)$.
The following fact is well-known (see, for example \cite{GLSS19} for a proof).
\begin{fact}[Minimax theorem]
\label{fact:minimax}
$\mathcal{R}_\epsilon(g)=\max_{\mu}\D^\mu_\epsilon(g).$
\end{fact}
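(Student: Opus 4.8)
### Proof Proposal for Fact~\ref{fact:minimax} (Minimax Theorem)

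The plan is to establish both inequalities $\R_\epsilon(g) \geq \max_\mu \D^\mu_\epsilon(g)$ and $\R_\epsilon(g) \leq \max_\mu \D^\mu_\epsilon(g)$ separately. The first (easy) direction is a straightforward averaging argument: given an optimal randomized decision tree $\mathcal{R}$ computing $g$ with error $\epsilon$ and query complexity $q := \R_\epsilon(g)$, and given any distribution $\mu$, I would observe that $\mathbf{E}_{x \sim \mu}\Pr_{T \sim \mathcal{R}}[T(x) \neq g(x)] \leq \epsilon$, so by swapping the order of expectation there exists a single deterministic tree $T$ in the support of $\mathcal{R}$ with $\Pr_{x \sim \mu}[T(x) \neq g(x)] \leq \epsilon$. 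Since every tree in the support of $\mathcal{R}$ has depth at most $q$, this gives $\D^\mu_\epsilon(g) \leq q$, and taking the max over $\mu$ yields the direction.

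The second (hard) direction is the content of the minimax theorem and requires an appeal to von Neumann's minimax theorem (or LP duality / the separating hyperplane theorem). I would set this up as a two-player zero-sum game: first fix a depth budget $q$; let the rows be indexed by all deterministic decision trees of depth at most $q$ on $m$ variables (a finite set), let the columns be indexed by all inputs $x \in \bin^m$, and let the payoff matrix have entry $\mathbf{1}[T(x) \neq g(x)]$ (the error indicator). The randomized-tree player (minimizer) picks a distribution over rows; the adversary (maximizer) picks a distribution $\mu$ over columns. By von Neumann's minimax theorem, the value $\min_{\mathcal{R}} \max_\mu \mathbf{E}[\text{error}] = \max_\mu \min_{\mathcal{R}} \mathbf{E}[\text{error}] = \max_\mu \min_T \Pr_{x\sim\mu}[T(x)\neq g(x)]$, where in the last equality the minimizing $\mathcal{R}$ can be taken to be a pure strategy (a single deterministic $T$). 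Now, if $q \geq \max_\mu \D^\mu_\epsilon(g)$, then for every $\mu$ there is a depth-$q$ tree achieving error $\leq \epsilon$, so the right-hand side is $\leq \epsilon$; hence the left-hand side is $\leq \epsilon$, meaning there is a randomized tree of depth $\leq q$ with worst-case-over-$x$ error $\leq \epsilon$, i.e. $\R_\epsilon(g) \leq q$. Taking $q = \max_\mu \D^\mu_\epsilon(g)$ completes the argument.

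The main obstacle, and the only real subtlety, is the passage from a continuum of distributions $\mu$ to a setting where von Neumann's theorem applies: the adversary's strategy space (distributions over $\bin^m$) is a simplex of finite dimension $2^m - 1$, and the minimizer's strategy space (distributions over depth-$\leq q$ trees) is also a finite-dimensional simplex, so compactness and the bilinearity of the payoff are immediate and von Neumann's theorem applies directly without any limiting argument. One should also note the harmless point that restricting the minimizer to trees of depth \emph{exactly} $q$ versus \emph{at most} $q$ does not matter, since a shallower tree can always be padded with dummy queries. Since this is a well-known fact with a reference already cited (\cite{GLSS19}), I would keep the proof brief, emphasizing the game setup and the application of von Neumann's theorem, and leave the routine verifications to the reader.
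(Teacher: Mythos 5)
The paper does not prove this statement at all: it is stated as a known fact with a pointer to \cite{GLSS19}, so there is no in-paper argument to compare against. Your proposal is the standard and correct proof of Yao's minimax principle for query complexity: the easy direction by averaging over the support of an optimal randomized tree, and the hard direction by applying von Neumann's minimax theorem to the finite zero-sum game between depth-at-most-$q$ deterministic trees and inputs, then observing that a worst-case bound over distributions $\mu$ specializes to point masses to give a worst-case bound over inputs. The finiteness observations you flag (finitely many trees of bounded depth, finite-dimensional simplices on both sides) are exactly the points that make the appeal to von Neumann legitimate, so the argument is complete.
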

We define the \emph{product distributional query complexity of $g$} with error $\epsilon$, $\D^\product_\epsilon(g)$, as follows. \[\D^\product_\epsilon(g):=\max_{\mu \in \mathsf{PROD}} \D^\mu_\epsilon(g).\]
\subsection{Sensitivity and influence}
\label{sec:influence}
For a total Boolean function $g$, a variable $x_i$ is said to be sensitive for an input $x$ if $g(x)\neq g(x^{\oplus i})$. The sensitivity of $x$ with respect to $g$, denoted by $s(g,x)$, is the number of sensitive bits of $x$, i.e., $|\{i \in [n]\mid g(x)\neq g(x^{\oplus i})\}|$. The sensitivity of $g$, denoted by $s(g)$, is the maximum sensitivity of any input $x$ with respect to $g$, i.e.,
\[s(g)=\max_{x \in \bin^m}s(g,x).\]
For a product distribution $\mu \in \mathsf{PROD}$ given by parameters $p_1,\ldots,p_m$, the \emph{influence} of $x_i$ with respect to $g$ and $\mu$ is defined as
\[\I_i(g):=4p_i(1-p_i)\Pr_{x \sim \mu}[g(x)\neq g(x^{\oplus i})],\]
and the influence of $g$ with respect to $\mu$ is defined as
\[\I(g)=\sum_{i=1}^m \I_i(g).\]
The following inequality follows easily from the above definitions, linearity of expectation, and the observation that $4p_i(1-p_i) \leq 1$ for all $p_i \in [0, 1]$.
\begin{align}
\label{eq:inf_avg_s}
\I(g) \leq \E_{x \sim \mu} s(g, x).
\end{align}
Let $\Var(g)$ denote the variance of the random variable $g(x)$ when $x$ is drawn from $\mu$. The Poincar\'{e} inequality bounds $\Var(g)$ in terms of $\I(g)$.
\begin{lemma}[Poincar\'{e} inequality]
\label{lem:poincare}
For every product distribution $\mu$, $4\Var(g) \leq \I(g)$.
\end{lemma}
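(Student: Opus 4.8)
The plan is to prove the Poincar\'{e} inequality via the Efron--Stein (Doob martingale) decomposition of the variance with respect to the product measure $\mu = \mu_1 \otimes \cdots \otimes \mu_m$, where $\mu_i$ sets the $i$-th bit to $1$ with probability $p_i$ and to $0$ with probability $1-p_i$.

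First I would build the Doob martingale: for $0 \le j \le m$ let $g_j := \E_{x\sim\mu}[g(x) \mid x_1,\dots,x_j]$, so that $g_0 = \E_\mu[g]$ and $g_m = g$, and set $d_j := g_j - g_{j-1}$. The increments $d_j$ are pairwise orthogonal in $L^2(\mu)$ because $d_j$ is a function of $x_1,\dots,x_j$ with $\E[d_j \mid x_1,\dots,x_{j-1}] = 0$; hence $\Var(g) = \sum_{j=1}^m \E_\mu[d_j^2]$.

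Next I would show $\E_\mu[d_j^2] \le \tfrac14\,\I_j(g)$ for each $j$. Fixing $x_1,\dots,x_{j-1}$, note that $g_{j-1} = \E_{x_j}[g_j]$, so $d_j$ is the centering of $g_j$ in the single bit $x_j$ and $\E_\mu[d_j^2] = \E_{x_1,\dots,x_{j-1}}\!\big[\Var_{x_j}(g_j)\big]$. For one biased bit, $\Var_{x_j}(h) = p_j(1-p_j)\big(h|_{x_j=1} - h|_{x_j=0}\big)^2$, and since $g_j|_{x_j=b} = \E_{x_{j+1},\dots,x_m}[\,g \mid x_1,\dots,x_{j-1}, x_j=b\,]$, Jensen's inequality lets me pull the square inside the expectation over $x_{j+1},\dots,x_m$, giving
\[
\E_\mu[d_j^2] \;\le\; p_j(1-p_j)\,\E_{\mu}\big[(\partial_j g)^2\big],
\]
where $\partial_j g$ denotes the discrete derivative $g|_{x_j=1} - g|_{x_j=0}$, a function of the coordinates other than $j$. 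Here the totality of $g$ enters crucially: because $g$ is $\bin$-valued, $(\partial_j g)^2$ is exactly the indicator of the event $g(x)\ne g(x^{\oplus j})$, so $\E_\mu[(\partial_j g)^2] = \Pr_{x\sim\mu}[g(x)\ne g(x^{\oplus j})]$, and therefore $\E_\mu[d_j^2] \le p_j(1-p_j)\Pr_{x\sim\mu}[g(x)\ne g(x^{\oplus j})] = \tfrac14\,\I_j(g)$.

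Summing over $j$ yields $4\Var(g) = 4\sum_j \E_\mu[d_j^2] \le \sum_j \I_j(g) = \I(g)$, which is the claim. I expect the only point requiring care to be the Jensen step combined with the $\bin$-valued\-ness of $g$: after pulling the square inside the conditional expectation one needs the resulting quantity to equal the sensitivity probability exactly (not merely be bounded by it), and this is precisely what totality supplies; the remaining ingredients --- orthogonality of the martingale increments and the one-coordinate variance identity $\Var_{x_j}(h) = p_j(1-p_j)(h|_{x_j=1}-h|_{x_j=0})^2$ --- are routine.
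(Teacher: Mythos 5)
Your proof is correct. Note, however, that the paper does not prove this lemma at all: it is stated as a known fact and the proof is deferred to O'Donnell's book. The textbook route is spectral: expand $g$ in the $p$-biased orthonormal Fourier basis, write $\Var(g)=\sum_{S\neq\emptyset}\hat g(S)^2$ and $\I(g)=\sum_S |S|\hat g(S)^2$ (in the appropriate normalization), and compare term by term. Your Efron--Stein/Doob-martingale argument reaches the same constant without ever introducing the biased Fourier basis: the orthogonality of the increments $d_j$ plays the role of Parseval, the one-coordinate identity $\Var_{x_j}(h)=p_j(1-p_j)(h|_{x_j=1}-h|_{x_j=0})^2$ supplies the factor $p_j(1-p_j)$ appearing in the paper's definition of $\I_j$, and Jensen replaces the inequality $1\le|S|$ for $S\neq\emptyset$. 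All steps check out, including the normalization: $\E[d_j^2]\le p_j(1-p_j)\Pr_{x\sim\mu}[g(x)\neq g(x^{\oplus j})]=\tfrac14\I_j(g)$, so summing gives exactly $4\Var(g)\le\I(g)$. One small point of emphasis: totality ($\bin$-valuedness) is needed only to identify $\E_\mu[(\partial_j g)^2]$ with the probability $\Pr_{x\sim\mu}[g(x)\neq g(x^{\oplus j})]$ in the paper's combinatorial definition of influence; an upper bound at that step would in fact suffice for the direction of the inequality, though for Boolean $g$ you get equality for free. Your argument is a perfectly acceptable self-contained substitute for the citation, and is arguably more elementary.
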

A proof of the Poincar\'{e} inequality may be found in \cite{o14}.

\noindent In the notations $\I_i, \I$ and $\Var$, the dependence on $\mu$ is suppressed. $\mu$ will be clear from the context.
\subsection{Randomized query complexity for product distributions}
\label{par:rprod}
Let $\mu$ be a product distribution, and $T$ be a deterministic decision tree. For each leaf $\ell$ of $T$, let $p^\mu_\ell$ be the probability that the computation of $T$ on an input drawn from $\mu$ reaches $\ell$. Let $\mathbf{p}^\mu$ denote the probability distribution $(p^\mu_\ell)_{\ell}$ over the leaves of $T$. We say that a randomized decision tree $\mathcal{R}$ computes $g$ with error $\epsilon$ for product distributions if for every product distribution $\mu \in \mathsf{PROD}$,
\[\E_{T \sim \mathcal{R}}\E_{\ell\sim \mathbf{p}^\mu} [\min\{\Pr_{x \in \mu|_\ell}[g(x)=0], \Pr_{x \in \mu|_\ell}[g(x)=1]\}] \leq \epsilon,\]
where the inner expectation is over the leaves of $T$. We define $\min\{\Pr_{x \in \mu\mid_\ell}[f(x)=0],\Pr_{x \in \mu\mid_\ell}[f(x)=1]\}$ to be $0$ if $p_\ell^\mu=0$; the conditional distribution $\mu|_\ell$ is not defined in this case. The \emph{randomized query complexity} of $g$ for product distribution for error $\epsilon$, denoted by $\R^{\product}_\epsilon(f)$, is the minimum query complexity of a randomized decision tree $\mathcal{R}$ that satisfies the above condition. Define $\R^{\product}(f):=\R^{\product}_{1/3}(f)$.

 Note that in the above definition, no reference has been made to the labels of the leaves of $T$. For the purpose of this definition, $\mathcal{R}$ can be thought of as a probability distribution over trees whose leaves are unlabelled.

 The following claim shows that $\D^\product_\epsilon(g)$ is bounded above by $\R^{\product}_\epsilon(g)$. A proof may be found in Section~\ref{sec:dprod<=rprod}.
 
 % For every product distribution $\mu$, each leaf $\ell$ of each $T$ in the support of $\mathcal{R}$ can be labelled by the bit which has at least $1/2$ probability mass in the leaf with respect to $\mu|_\ell$ (thus the leaf labels will depend on the distribution). The resulting randomized decision tree is seen to decide $g$ with error probability at most $\epsilon$ with respect to $\mu$. By fixing the randomness of $\mathcal{R}$ appropriately (depending on $\mu$) we will get a deterministic decision tree that decides $g$ with error probability at most $\epsilon$ with respect to $\mu$. We thus have the following.
\begin{claim}
\label{clm:rprod>=dprod}
For every Boolean function $g$ and parameter $\epsilon \in [0, 1/2]$,
$\D^\product_\epsilon(g) \leq \R^{\product}_\epsilon(g)$.
\end{claim}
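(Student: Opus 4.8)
The plan is to prove $\D^\product_\epsilon(g) \leq \R^\product_\epsilon(g)$ by an averaging argument: we take an optimal randomized decision tree $\mathcal{R}$ witnessing $\R^\product_\epsilon(g)$, and for a fixed product distribution $\mu$ we want to extract a single deterministic tree from $\mathcal{R}$ (with labelled leaves) that computes $g$ with error at most $\epsilon$ over $\mu$. The natural candidate is: pick the best deterministic tree $T$ in the support of $\mathcal{R}$ for the distribution $\mu$, and then label each leaf $\ell$ of $T$ by the more likely value of $g$ under $\mu|_\ell$, i.e. by $\arg\max_{b}\Pr_{x \sim \mu|_\ell}[g(x)=b]$ (breaking ties arbitrarily, and labelling leaves with $p^\mu_\ell = 0$ arbitrarily since they are never reached).

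First I would observe that for any fixed deterministic tree $T$ with leaves labelled in this optimal way, the error of $T$ under $\mu$ is exactly
\[
\Pr_{x \sim \mu}[g(x) \neq T(x)] = \sum_\ell p^\mu_\ell \cdot \min\{\Pr_{x \sim \mu|_\ell}[g(x)=0],\ \Pr_{x \sim \mu|_\ell}[g(x)=1]\} = \E_{\ell \sim \mathbf{p}^\mu}\bigl[\min\{\Pr_{x \sim \mu|_\ell}[g(x)=0],\ \Pr_{x \sim \mu|_\ell}[g(x)=1]\}\bigr],
\]
because, conditioned on reaching leaf $\ell$, the input is distributed as $\mu|_\ell$ and the label $T(x)$ on that leaf disagrees with $g(x)$ exactly on the minority side. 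This is the key identity linking the two notions of error; it is essentially definitional once one notes that the event of reaching $\ell$ partitions the probability space and that $\mu$ conditioned on a leaf's subcube is again a product-type conditioning, so the expression is well-defined. The only subtlety is handling leaves with $p^\mu_\ell=0$, which by convention contribute $0$ on both sides.

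Next I would apply the definition of $\mathcal{R}$ computing $g$ with error $\epsilon$ for product distributions: for the given $\mu$,
\[
\E_{T \sim \mathcal{R}}\Bigl[\E_{\ell \sim \mathbf{p}^\mu}[\min\{\Pr_{x \sim \mu|_\ell}[g(x)=0],\ \Pr_{x \sim \mu|_\ell}[g(x)=1]\}]\Bigr] \leq \epsilon.
\]
By the identity above, the inner quantity equals the error of $T$ (with optimally-labelled leaves) under $\mu$. Since the expectation over $T \sim \mathcal{R}$ is at most $\epsilon$, there exists a deterministic tree $T^\star$ in the support of $\mathcal{R}$ whose error under $\mu$ is at most $\epsilon$. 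This $T^\star$, with its leaves labelled optimally for $\mu$, has depth at most the query complexity of $\mathcal{R}$, which is $\R^\product_\epsilon(g)$. Hence $\D^\mu_\epsilon(g) \leq \R^\product_\epsilon(g)$. Since $\mu \in \mathsf{PROD}$ was arbitrary, taking the maximum over $\mu$ gives $\D^\product_\epsilon(g) = \max_{\mu \in \mathsf{PROD}} \D^\mu_\epsilon(g) \leq \R^\product_\epsilon(g)$.

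I do not anticipate a serious obstacle here — this is a standard ``fix the distribution, then average over the randomness and pick the best deterministic strategy'' argument, the decision-tree analogue of the easy direction of the minimax theorem. The one place to be careful is the bookkeeping around unlabelled versus labelled leaves: $\mathcal{R}$ in the definition of $\R^\product_\epsilon$ has unlabelled leaves, so the argument must explicitly produce the labelling (choosing the majority value of $g$ on each leaf's conditional distribution) and verify that this labelling is what makes the min-expression equal the actual error probability. Everything else is a direct unwinding of the definitions.
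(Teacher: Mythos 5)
Your proposal is correct and follows essentially the same route as the paper's proof: fix a product distribution $\mu$, label each leaf of each tree in the support of $\mathcal{R}$ by the majority value of $g$ under $\mu|_\ell$, note that the min-expression then equals the labelled tree's error under $\mu$, and extract a single good deterministic tree by averaging. Your write-up is if anything slightly more explicit than the paper's about the identity between the leaf-wise min and the actual error probability.
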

\subsection{Sabotage complexity.}
\label{par:sabotage}
The \emph{sabotage complexity} of a Boolean function $g$ for error $\epsilon$, denoted by $\RS_{\epsilon}(g)$, is defined to be the minimum query complexity of any randomized decision tree $\mathcal{R}$ for which the following is true: For every $x = (x_1,\ldots, x_m)\in g^{-1}(0)$, $y=(y_1,\ldots, y_m) \in g^{-1}(1)$, with probability at least $1-\epsilon$, a decision tree $T$ drawn from $\mathcal{R}$ when run on $x$ queries an index $i$ such that $x_i\neq y_i$\footnote{Note that $T$ queries an index $i$ such that $x_i\neq y_i$ when run on $x$ if and only if $T$ queries an index $j$ such that $x_j\neq y_j$ when run on $y$.}. Define $\RS(g):=\RS_{1/3}(g)$.

Sabotage complexity was defined by Ben-David and Kothari \cite{BK15}. They defined the measure as the minimum expected query complexity of any randomized decision tree to distinguish each pair of inputs $x \in g^{-1}(0), y \in g^{-1}(1)$. However, as the authors observed, the definition stated above is within a constant factor of the original definition in \cite{BK15}. See more discussion on this in Section~\ref{sec:rs-lb-nand} where we work with the original definition.

The following fact can be proven by standard BPP amplification.
\begin{fact}
\label{fact:error_red}
$\forall \epsilon, \epsilon' \in (0,1)$ and $\epsilon<\epsilon'$, $\RS_\epsilon(g)=O\left(\RS_{\epsilon'}(g)\cdot\frac{\log(1/\epsilon)}{\log(1/\epsilon')}\right)$.
\end{fact}
Ben-David and Kothari proved that the sabotage complexity is lower bounded by many complexity measures that are studied in the context of decision trees. In particular, $\RS(g)$ is lower bounded by $s(g)$.
\begin{fact}[\cite{BK15}]
\label{fact:RS>s}
For all Boolean function $g:\bin^m \to \bin$, $\RS(g)=\Omega(s(g))$.
\end{fact}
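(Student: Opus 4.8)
The plan is to extract the sensitivity lower bound directly from the definition of $\RS_{1/3}$ by exhibiting a single highly sensitive input that forces many coordinates to be queried; this is essentially the argument of Ben-David and Kothari. First I would fix an input $x \in \bin^m$ attaining the maximum sensitivity, so that $s(g,x) = s(g) =: s$, and assume without loss of generality that $g(x) = 0$ (the case $g(x) = 1$ is symmetric, and in any case the footnote to the definition of $\RS_\epsilon$ records that ``$T$ run on $x$ queries a coordinate where $x$ and $y$ differ'' and ``$T$ run on $y$ queries such a coordinate'' are the same event). Let $S = \{i \in [m] : g(x^{\oplus i}) \neq g(x)\}$ be the set of sensitive coordinates of $x$, so $|S| = s$ and $x^{\oplus i} \in g^{-1}(1)$ for every $i \in S$.

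Next, let $\mathcal{R}$ be any randomized decision tree witnessing $\RS_{1/3}(g)$, and for a deterministic tree $T$ in its support let $Q_T(x) \subseteq [m]$ be the set of indices queried along the computation of $T$ on the fixed input $x$; note $|Q_T(x)|$ is at most the depth of $T$, hence at most the query complexity of $\mathcal{R}$. For a fixed $i \in S$, apply the defining property of $\mathcal{R}$ to the pair $x \in g^{-1}(0)$, $x^{\oplus i} \in g^{-1}(1)$: with probability at least $2/3$ over $T \sim \mathcal{R}$, running $T$ on $x$ queries some index $j$ with $x_j \neq (x^{\oplus i})_j$. Since $x$ and $x^{\oplus i}$ differ in the single coordinate $i$, this event is exactly $\{i \in Q_T(x)\}$, so $\Pr_{T \sim \mathcal{R}}[i \in Q_T(x)] \geq 2/3$ for every $i \in S$.

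Finally I would average over $S$ by linearity of expectation:
\[
\E_{T \sim \mathcal{R}}\bigl[\,|Q_T(x) \cap S|\,\bigr] \;=\; \sum_{i \in S} \Pr_{T \sim \mathcal{R}}[i \in Q_T(x)] \;\geq\; \tfrac{2}{3}\,|S| \;=\; \tfrac{2}{3}\,s(g).
\]
Since $|Q_T(x) \cap S| \le |Q_T(x)|$ is bounded by the query complexity of $\mathcal{R}$ for every $T$ in the support, that query complexity is at least $\tfrac{2}{3}\,s(g)$, giving $\RS(g) = \RS_{1/3}(g) \geq \tfrac{2}{3}\,s(g) = \Omega(s(g))$. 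There is no real obstacle here: the only step worth stating carefully is the reduction ``distinguishing $x$ from $x^{\oplus i}$ while reading $x$ forces a query to coordinate $i$'', which is immediate because the two inputs differ in exactly one place; everything else is a one-line counting argument, and one could instead invoke Fact~\ref{fact:error_red} to obtain a cleaner constant if desired.
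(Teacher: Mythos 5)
Your proof is correct and is exactly the standard argument from Ben-David and Kothari; the paper itself does not reprove this fact but simply cites \cite{BK15}. The single-coordinate reduction (distinguishing $x$ from $x^{\oplus i}$ forces querying $i$), the union of these events over the sensitive set via linearity of expectation, and the final comparison of the expectation with the worst-case depth are all handled correctly.
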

\section{Sabotage complexity and product distributions}
In this section we first prove Lemma~\ref{lem:main1}. We then use Lemma~\ref{lem:main1} to prove Theorems~\ref{thm:th2} and~\ref{thm:th3}.
\ \\
The following lemma says that to distinguish each pair of inputs on which the function values differ with high probability, it is necessary and sufficient to query each sensitive bit of each input with high probability.
\begin{lemma}
\label{lem:rs_and_sens_ip}
Let $g:\bin^n \to \bin$ be a total Boolean function. Then, $\RS_\epsilon(g)\leq r$ if and only if there is a randomized decision tree $\mathcal{R}$ of query complexity at most $r$ such that for each input $x$ and each variable $x_i$ sensitive for $x$, $\Pr_{T \sim \mathcal{R}}[\mbox{$T$ does not query $x_i$ when run on $x$}] \leq \epsilon$.
\end{lemma}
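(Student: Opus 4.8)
The plan is to show the two directions separately, with the key observation being that for a \emph{total} function $g$, a bit $x_i$ is sensitive for $x$ precisely when flipping it crosses the decision boundary, and this lets us translate between ``distinguishing pairs $(x,y)$ with $g(x)\neq g(y)$'' and ``querying sensitive bits''. For the ``if'' direction, suppose $\mathcal{R}$ queries every sensitive bit of every input with probability $\geq 1-\epsilon$. Given $x\in g^{-1}(0)$ and $y\in g^{-1}(1)$, let $I=\{i: x_i\neq y_i\}$ be the set of coordinates on which they differ; since $g(x)\neq g(y)$, this set is nonempty. I would like to conclude that $\mathcal{R}$ run on $x$ queries \emph{some} index in $I$ with probability $\geq 1-\epsilon$. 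The clean way to do this is to walk along the ``Hamming path'' from $x$ to $y$ obtained by flipping the coordinates of $I$ one at a time: $x=z^{(0)}, z^{(1)},\ldots, z^{(|I|)}=y$. Since $g(z^{(0)})=0\neq 1=g(z^{(|I|)})$, there is some step $t$ where $g(z^{(t-1)})\neq g(z^{(t)})$, i.e.\ the $t$-th flipped coordinate, call it $j\in I$, is sensitive for $z^{(t-1)}$. But now I run $\mathcal{R}$ on $z^{(t-1)}$: with probability $\geq 1-\epsilon$ it queries $j$. The subtlety is that I need this statement about $x$, not about $z^{(t-1)}$; here I use that a decision tree run on $x$ and on $z^{(t-1)}$ behaves identically until it queries a coordinate on which $x$ and $z^{(t-1)}$ disagree, and those coordinates are a subset of $I$. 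So the first query into $I$ happens at the same ``time'' on both inputs, and in particular $\mathcal{R}$ on $x$ queries some index of $I$ whenever $\mathcal{R}$ on $z^{(t-1)}$ queries $j$ --- giving probability $\geq 1-\epsilon$, as needed. Hence $\RS_\epsilon(g)\leq r$.

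For the ``only if'' direction, suppose $\RS_\epsilon(g)\leq r$, witnessed by $\mathcal{R}$, and fix an input $x$ and a coordinate $i$ sensitive for $x$. Set $y=x^{\oplus i}$; by sensitivity $g(x)\neq g(y)$, so $x,y$ form a valid $0/1$ pair (in one order or the other). The only coordinate on which $x$ and $y$ differ is $i$ itself, so the guarantee that $\mathcal{R}$ on $x$ queries \emph{some} differing coordinate with probability $\geq 1-\epsilon$ says exactly that $\mathcal{R}$ on $x$ queries $x_i$ with probability $\geq 1-\epsilon$, i.e.\ fails to query it with probability $\leq \epsilon$. Since this $\mathcal{R}$ has query complexity at most $r$ and works for every sensitive $(x,i)$ simultaneously, it is the desired tree.

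I expect the main (minor) obstacle to be making the ``Hamming path'' argument in the ``if'' direction fully rigorous --- specifically, the bookkeeping that a deterministic tree's behavior on $x$ and on an intermediate point $z^{(t-1)}$ coincides until the first query lands in $I$, so that ``queries $j$ when run on $z^{(t-1)}$'' can be upgraded to ``queries some index of $I$ when run on $x$.'' Everything else is essentially unwinding definitions; totality of $g$ is used only to guarantee that $x^{\oplus i}$ is a legitimate input and that the notion of a sensitive bit is well-defined on the whole cube.
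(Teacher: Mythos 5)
Your proof is correct and follows essentially the same route as the paper's: the ``only if'' direction is the same observation about the pair $(x, x^{\oplus i})$, and your Hamming-path argument in the ``if'' direction (locating the first flip that changes the function value and noting that the tree's computation on $x$ and on the intermediate point coincide until a query lands in the set of differing coordinates) is exactly the paper's argument, stated in contrapositive form.
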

\begin{proof}
\begin{description}
\item[(If)] Let $\mathcal{R}$ be a randomized decision tree of complexity at most $r$ such that for every input $x$ and every variable $x_i$ sensitive for $x$, $\Pr_{T \sim \mathcal{R}}[\mbox{$T$ does not query $x_i$ when run on $x$}] \leq \epsilon$. We will show that $\mathcal{R}$ fails to distinguish any pair $w \in g^{-1}(0), y \in g^{-1}(1)$ with probability at most $\epsilon$. Fix such a pair $w, y$. let $B=\{i_1,\ldots, i_k\}$ be the positions where $w$ and $y$ differ. Define $B_0:=\emptyset$ and for $1 \leq j \leq k$, define $B_j:=\{i_1,\ldots, i_j\}$. Let $m$ be the smallest index such that $g(w^{\oplus B_m})=1$. Thus, variable $w_m$ is sensitive for $w^{\oplus B_{m-1}}$ and $w^{\oplus B_m}$. Now, observe that if $T$ does not query any variable $w_{i_j}$ with $i_j \in B$ when run on $w$, then $T$ does not query $w_m$ when run on $w^{\oplus B_{m-1}}$. By our assumption about $\mathcal R$, the probability of this happening when $T$ is sampled from $\mathcal R$ is at most $\epsilon$.
\item[(Only if)] Let $\RS_\epsilon(g)\le r$. Thus there exists a randomized decision tree $\mathcal{R}$ of query complexity $r$ that fails to distinguish each pair $w \in g^{-1}(0), y \in g^{-1}(1)$ with probability at most $\epsilon$. Without loss of generality, assume that $x \in g^{-1}(0)$. Then $x^{\oplus i} \in g^{-1}(1)$. Since distinguishing $x$ and $x^{\oplus i}$ is equivalent to querying $x_i$ when run on $x$, the proof follows.
\end{description}
\end{proof}
Now we proceed to proving Lemma~\ref{lem:main1}. For convenience, we use $\epsilon$ for $\epsilon(n)$ throughout the following proof.
\begin{proof}[Proof of Lemma~\ref{lem:main1}]
\begin{description}
\item[(Part 1)] Let $\RS_{1-\epsilon}(g)=r$. By Lemma~\ref{lem:rs_and_sens_ip}, there exists a randomized query algorithm $\mathcal{R}$ of complexity at most $r$ such that for each input $x$ and each variable $x_i$ sensitive for $x$, $\Pr_{T \sim \mathcal{R}}[\mbox{$T$  does  not query $x_i$ when run on $x$}] \leq 1-\epsilon$. Let $\mathcal R'$ be the algorithm obtained by repeating $\mathcal R$ $\frac{2}{\epsilon}\ln s(g)$ times with independent randomness. Thus for each input $x$ and each variable $x_i$ sensitive for $x$, we have that $\Pr_{T \sim \mathcal{R}'}[\mbox{$T$ does not query $x_i$ when run on x}] \leq (1-\epsilon)^{(\frac{1}{\epsilon}\cdot 2 \ln s(g))} \leq \frac{1}{s(g)^2}$, where we have used the inequality $1-x \leq e^{-x}$ for all $x \in (-\infty, \infty)$. Again for each input $x$, by a union bound over all variables $x_i$ sensitive for $x$, we have that the probability that a deterministic tree sampled from $\mathcal{R}'$ does not query all variables sensitive for $x$ when run on $x$, is at most $\frac{s(g,x)}{s(g)^2}\leq \frac{1}{s(g)}$. The query complexity of $\mathcal{R}'$ is $O(\frac{1}{\epsilon}\cdot \RS_{1-\epsilon}(g) \log s(g))$. We will show that $\mathcal{R}'$ computes $g$ with error $1/3$ for product distributions. This will complete the proof of this part.

To this end, fix a product distribution $\mu$. For any deterministic decision tree $T$ and input $x$  of $f$, define
\[\Qu(T, x)=\left\{\begin{array}{ll}1 & \mbox{if $T$ does not query all sensitive variables of $x$ when run on $x$,}\\0 & \mbox{otherwise.}\end{array}\right.\]
By the property of $\mathcal{R}'$, for every input $x$, we have that
\[\E_{T \sim \mathcal{R}'}[\Qu(T, x)]=\Pr_{T \sim \mathcal{R}'}[\Qu(T, x)=1]\leq \frac{1}{s(g)}.\]
Since the above is true for each $x$, we have the following for a random input $x$ sampled from $\mu$.
\begin{align}
\label{eq:mainlemeq1}\E_{T \sim \mathcal{R}'}\E_{x \sim \mu}[\Qu(T, x)]\leq \frac{1}{s(g)}.\end{align}
For each leaf $\ell$ of $T$ , let $p^\mu_\ell$ be the probability that the computation of $T$ on an input drawn from $\mu$ reaches $\ell$ and $\mathbf{p}^\mu$ denote the probability distribution $(p^\mu_\ell)_{\ell}$ over the leaves of $T$. We rewrite (\ref{eq:mainlemeq1}) as follows.
\begin{align}
\label{eq:mainlemeq2}\E_{T \sim \mathcal{R}'}\E_{\ell\sim p^\mu}\E_{x \sim \mu\mid_\ell}[\Qu(T, x)]\leq \frac{1}{s(g)},\end{align}
treating $\E_{x \sim \mu\mid_\ell}[\Qu(T, x)]$ as $0$ if $p^\mu_\ell=0$. Now, fix an arbitrary leaf $\ell$ of $T$ such that $p^\mu_\ell>0$, and consider the Boolean function $g\mid_\ell$. Note that for any $x \in \ell$, if $\Qu(T, x)=0$, then $s(g\mid_\ell, x)=0$. We thus have that
\begin{align}
\label{eq:maianlemeq3}
\E_{x\sim\mu\mid_\ell}[s(g\mid_\ell, x)]\leq \Pr_{x \sim \mu\mid_\ell}[\Qu(T, x)=1]\cdot s(g\mid_\ell) \leq \E_{x \sim \mu\mid_\ell}[\Qu(T, x)]\cdot s(g).
\end{align}
Since $\mu$ is a product distribution and $\ell$ is a subcube, $\mu\mid_\ell$ is also a product distribution. Equations~(\ref{eq:inf_avg_s}) and~(\ref{eq:maianlemeq3}) thus imply that
\begin{align}
\label{eq:mainlemeq4}
\I(g\mid_\ell) \leq \E_{x \sim \mu\mid_\ell}[\Qu(T, x)]\cdot s(g).
\end{align}
Together with Poincar\'{e} inequality (Lemma~\ref{lem:poincare}), (\ref{eq:mainlemeq4}) implies that \begin{align}
\label{eq:mainlemeq5}
\Var(f\mid_\ell) \leq \frac{1}{4}\cdot\E_{x \sim \mu\mid_\ell}[\Qu(T, x)]\cdot s(f).
\end{align}
 Now, for a random variable $X$ taking value in $\{0, 1\}$, $\Var(X)=4\Pr[X=0]\Pr[X=1]\geq 2 \min\{\Pr[X=0], \Pr[X=1]\}$ (since $\max\{\Pr[X=0], \Pr[X=1]\}\geq \frac{1}{2}$). Since $\ell$ is an arbitrary leaf, we have by Equations (\ref{eq:mainlemeq5}) and (\ref{eq:mainlemeq2}) that
\begin{align*}
&\E_{T \sim \mathcal{R}'}\E_{\ell \sim p^\mu}[\min\{\Pr_{x\sim \mu\mid_\ell}[g(x)=0], \Pr_{x\sim \mu\mid_\ell}[g(x)=1]\}] \\
\leq &\frac{1}{2} \cdot \E_{T \sim \mathcal{R}'}\E_{\ell \sim p^\mu}[\Var(g\mid_\ell)] &\text{by the above discussion} \\
\leq &\frac{1}{8} \cdot \E_{T \sim \mathcal{R}'}\E_{\ell \sim p^\mu}\E_{x \sim \mu\mid_\ell}[\Qu(T, x)]\cdot s(g) & \text{by Equation~(\ref{eq:mainlemeq5})} \\
\leq & \frac{1}{8}<\frac{1}{3}. & \text{by Equation~(\ref{eq:mainlemeq2})}
\end{align*}
Since $\mu$ is an arbitrary product distribution, we have that $\mathcal{R}'$ computes $g$ with error $1/3$ for product distributions.
\item[(Part 2)] Fix a randomized query algorithm $\mathcal{R}$ that attains $\R^\product_{\frac{1}{2}-\epsilon}(g)$. We will show that $\mathcal{R}$ also attains $\RS_{1-2\epsilon}(g)$. By Lemma~\ref{lem:rs_and_sens_ip} it is sufficient to show that for each input $x$ and each variable $x_i$ sensitive for $x$, $\Pr_{T \sim \mathcal{R}}$[$T$ does not query $x_i$ when run on $x$]$\leq 1-2\epsilon$. To this end, fix an input $x$ and a variable $x_i$ sensitive for $x$. Now consider the distribution $\mu$ that places a probability mass of $1/2$ on $x$ and places the remaining mass of $1/2$ on $x^{\oplus i}$. Note that $\mu$ is a product distribution. Thus from the property of $\R$ we have that
\begin{align}
\label{eq:part2-1}
    \E_{T \sim \mathcal{R}}\E_{\ell\sim \mathbf{p}^\mu} [\min\{\Pr_{x \in \mu|_\ell}[g(x)=0], \Pr_{x \in \mu|_\ell}[g(x)=1]\}] \leq \frac{1}{2}-\epsilon.
\end{align}
Now if $T$ does not query $x_i$ when run on $x$, then $T$ has a leaf $\ell$ that contains both $x$ and $x^i$, $p^\mu_\ell=1$, and for all other leaves $\ell'$ of $T$, $p^\mu_{\ell'}=0$. Furthermore, $\min\{\Pr_{x \in \mu|_\ell}[g(x)=0], \Pr_{x \in \mu|_\ell}[g(x)=1]\}=1/2$. Thus, $\E_{\ell\sim \mathbf{p}^\mu} [\min\{\Pr_{x \in \mu|_\ell}[g(x)=0], \Pr_{x \in \mu|_\ell}[g(x)=1]\}] =1/2$. We thus have that,
\begin{align}
&\E_{T \sim \mathcal{R}}\E_{\ell\sim \mathbf{p}^\mu} [\min\{\Pr_{x \in \mu|_\ell}[g(x)=0], \Pr_{x \in \mu|_\ell}[g(x)=1]\}]\nonumber \\
&\geq \Pr_{T\sim\mathcal{R}}[\text{$T$ does not query $x_i$ when run on $x$}]\cdot\frac{1}{2}. \label{eq:part2-2}
\end{align}
Equations~(\ref{eq:part2-1}) and~(\ref{eq:part2-2}) imply that
\begin{align*}
\Pr_{T\sim\mathcal{R}}[\text{$T$ does not query $x_i$ when run on $x$}]\cdot\frac{1}{2} &\leq \frac{1}{2}-\epsilon \\
\implies \Pr_{T\sim\mathcal{R}}[\text{$T$ does not query $x_i$ when run on $x$}] &\leq 1-2\epsilon.
\end{align*}
This completes the proof.
\end{description}
\end{proof}
Now we prove Theorems~\ref{thm:th2} and~\ref{thm:th3}.
\begin{proof}[Proof of Theorem~\ref{thm:th2} Part 1]
Substituting $\epsilon(n)=1/6$ in part 2 of Lemma~\ref{lem:main1} we have that
\begin{align}
\R^\product(g)&\geq \RS_{2/3}=\Omega(\RS(g)),\nonumber
\end{align}
where the second equality follows from Fact~\ref{fact:error_red}.
\end{proof}
Theorem~\ref{thm:th2} (1) and Fact~\ref{fact:RS>s} imply that
\begin{align}
\Rprodnb{}(g)=\Omega(s(g)). \label{eq:rprods}
\end{align}
\begin{proof}[Proof of Theorem~\ref{thm:th2} Part 2]
\begin{align}
\RS(g) = \RS_{1/3}(g) &\geq \RS_{2/3}(g) \nonumber \\
&=\Omega(\Rprodnb{}(g)/\log s(g)) & \text{by Lemma~\ref{lem:main1} part 1 with $\epsilon(n)=1/3$} \nonumber \\
&=\Omega(\Rprodnb{}(g)/\log \Rprodnb{}(g)) &\text{by Equation~(\ref{eq:rprods})} \nonumber
\end{align}
\end{proof}
\begin{proof}[Proof of Theorem~\ref{thm:th3}]
We have
\begin{align}
\R^\product(g) &= O\left(\frac{1}{\epsilon(n)}\cdot \RS_{1-\epsilon(n)}(g) \log s(g)\right) &\text{Part (1) of Lemma~\ref{lem:main1}}\nonumber\\
&=O\left(\frac{1}{\epsilon(n)}\cdot \RS_{1-2\epsilon(n)}(g) \log s(g)\right) &\text{since $1-\epsilon(n) \geq 1-2\epsilon(n)$}\nonumber\\
&=O
\left(\frac{1}{\epsilon(n)}\cdot \R^\product_{\frac{1}{2}-\epsilon(n)}(g) \log s(g)\right) &\text{by part (2) of Lemma~\ref{lem:main1}} \nonumber \\
&=O\left(\frac{1}{\epsilon(n)}\cdot \R^\product_{\frac{1}{2}-\epsilon(n)}(g))\log \R^\product(g)\right). &\text{by Equation~(\ref{eq:rprods})}
\nonumber
\end{align}
\end{proof}
\section{Separation between $\D^\product$ and $\R$}
\label{sec:nand_dist}
In this section we prove Lemma~\ref{lem:main21}. Recall that $g_d$ denotes the NAND tree function of depth $d$. Snir \cite{S85} and Saks and Wigderson \cite{SW86} were the first to study $g_d$ in the context of randomized query complexity. As mentioned in Section~\ref{sec:intro}, it is known from the works of Saks and Wigderson \cite{SW86} and Santha \cite{S91} that $\R(g_d)=\Theta(d^\alpha$) where $\alpha =\frac{1+\sqrt{33}}{4}$.

For a distribution $\mu$, the the zero-error distributional complexity of a Boolean function $g$, that we denote by $\overline{\D_0^\mu(g)}$, is the least expected number of queries made by any (deterministic) tree $T$ on a random input sampled from $\mu$. Define $\Dprod{0}(g):=\max_{\mu \in \mathsf{PROD}}\overline{\D_0^\mu(g)}$. By Markov's inequality, it follows that $\Dprodnb{}(g)=O(\Dprod{0}(g))$.
\begin{proof}[Proof of Lemma~\ref{lem:main22}]
We will prove an upper bound on $\Dprod{0}(g)$. By the preceding discussion, that will prove the lemma.

Let $\mu$ be any product distribution over $\bin^n$. Define $T(d, \mu):=\Dprodmu{0}{\mu}(g_d)$ and $T(d):=\Dprod{0}(g_d)$. Consider the query algorithm $\A_\mu$\footnote{Note that $\A_\mu$ needs the knowwledge of $\mu$.} given in Algorithm~\ref{algo:nand-distn}.\\

\begin{algorithm}[H]
\label{algo:nand-distn}
\SetAlgoLined
\textbf{Input: } Query access to $x=(x_1, \ldots, x_{2^d})$.\\
$g \gets g_d$.\\
\If{$g$ is a variable}{Query $g$. Return the outcome of the query.}
\Else{
    Let $g_\ell$ and $g_r$ respectively be the left and right subtrees of $g$. \\
    Let $\mu_\ell$ and $\mu_r$ respectively be the product distributions induced on the input spaces of $g_\ell$ and $g_r$ by $\mu$.\\
    $t\gets \arg \max_{i \in \{\ell, r\}} \Pr_{y \sim \mu_i}[g_i(y)=0]$. \\
    $s \gets \{\ell, r\}\setminus \{t\}$. \\ For $i \in \{\ell, r\}$, let $x^{(i)}$ be the input to $g_i$. \\
    \If{$\A_{\mu_t}(x^{(t)})=0$}{return $1$.}
    \Else{return 
    $\overline{\A_{\mu_s}(x^{(s)})}$.}
}
\caption{$\A_\mu(x)$}
\end{algorithm}
$\A_\mu$ works as follows: if $d=1$, i.e., if $g_d$ is a single variable, then $\A_\mu$ queries and returns the value of the variable. Else, $\A_\mu$ recursively evaluates a subtree of the root of $g_d$ whose probability of evaluating to $0$ is at least that of the other subtree.\footnote{By `recursively evaluates' we  mean that $\A_\mu$ invokes $\A_{\mu'}$ for the distribution $\mu'$ induced by $\mu$ on the domain of the subfunction under consideration.} If the recursive call returns $0$, $\A_\mu$ returns $1$. Else, $\A_\mu$ recursively evaluates the other subtree of the root of $g_d$ and returns the complement of the value returned by that recursive call. It is clear that on every input, $\A_\mu$ returns the correct answer with probability $1$.

Now we analyze the query complexity of $\A$. For $i \in \{\ell, r\}$, define $p_i:=\Pr_{x^{(i)} \sim \mu_i}[g_d(x^{(i)})=0]$. WLOG assume that $p_\ell \ge p_r$. $\A_\mu$ on input $x$ will recursively evaluate $g_\ell$ by invoking $\A_{\mu_\ell}$ on input $x^{(\ell)}$. If the recursive call returns $1$, then $\A$ will recursively evaluate $g_r$ by invoking $\A_{\mu_r}$ on input $x^{(r)}$. We thus have that,
\begin{align}
T(d, \mu)=T(d-1, \mu_\ell)+(1-p_\ell)T(d-1, \mu_r) .\label{eq:nandproeq1}
\end{align}
Let $\alpha_\ell, \alpha_r$ respectively be the probabilities that the left and right children of $g_\ell$ evaluate to $0$. Similarly, let $\beta_\ell, \beta_r$ respectively be the probabilities that the left and right children of $g_r$ evaluate to $0$. Without loss of generality assume that $\alpha_\ell\geq \alpha_r, \beta_\ell \ge \beta_r$ and $\alpha_\ell \le \beta_\ell$ (other cases are similar). By using a similar analysis as above and then upper bounding distributional query complexity for specific product distributions by the product distributional complexity we have that,
\begin{align}
T(d-1, \mu_\ell) &\leq T(d-2)+(1-\alpha_\ell)T(d-2),\mbox{\ and}\label{eq:eq2} \\
T(d-1, \mu_r) &\leq T(d-2)+(1-\beta_\ell)T(d-2)
 \nonumber \\
 &\leq T(d-2)+(1-\alpha_\ell)T(d-2). \label{eq:eq3} \end{align}
 Substituting Equations~(\ref{eq:eq2}) and~(\ref{eq:eq3}) in~(\ref{eq:nandproeq1}) we have that
 \begin{align}
T(d)&\leq T(d, \mu) \nonumber \\
&\leq (2-\alpha_\ell)(2-p_\ell)T(d-2) \label{eq:eq4}.
 \end{align}
Now, we have that $p_\ell=(1-\alpha_r)(1-\alpha_\ell) \geq (1-\alpha_\ell)^2$.\footnote{Here we use that $\mu$ is a product distribution.} Substituting in Equation~(\ref{eq:eq4}) we have that
\begin{align}
T(d) &\leq (2-\alpha_\ell)(2-(1-\alpha_\ell)^2)T(d-2) \nonumber \\
&=(2-\alpha_\ell)(1+2\alpha_\ell-\alpha_\ell^2)T(d-2) \label{eq:eq5}.
\end{align}
The maximum value of the function $f(x):=(2-x)(1+2x-x^2)$ in the domain $[0,1]$ is $\frac{2}{27}\cdot(17+7\sqrt 7)$. From Equation~(\ref{eq:eq5}) we have that
\[T(d)=O\left(\sqrt{\frac{2}{27}\cdot(17+7\sqrt 7))}\right)^d=O(\alpha-\delta)^d\mbox{\ \ for some constant $\delta >0$.}\]\end{proof}
\section{Sabotage complexity of NAND tree}
\label{sec:rs-lb-nand}
In this section, we prove Lemma~\ref{lem:main22}. Recall  that $g_d$ stands for the NAND tree function of depth $d$. Define $g_0(b)=b$ for $b \in \bin$.

For a randomized query algorithm $\mathcal{R}$ that decides $g:\bin^m \to \bin$ with error probability $0$, and for inputs $x, y$ such that $g(x)=0, g(y)=1$, define the expected sabotage complexity of $\mathcal{R}$ on the pair $x, y$, denoted by $\RS_E(\mathcal{R}, x, y)$, to be the expected number of queries that $\mathcal{R}$ makes until (and including) it queries an index $i$ such that $x_i \neq y_i$ when run on $x$ (or $y$). Define the expected sabotage complexity $\RS_E(\mathcal{R})$ to be $\max_{x, y \in \bin^m, g(x)=0, g(y)=1} \RS_E(\mathcal{R}, x, y)$, and the expected sabotage complexity $\RS_E(g)$ to be the minimum $\RS_E(\mathcal{R})$ for any randomized query algorithm $\mathcal{R}$ that decided $g$ with error probability $0$. As observed by Ben-David and Kothari, $\RS_E(g)=\Theta(\RS(g))$. In this section, we will work with $\RS_E$ in place of $\RS(g)$.

It follows by standard arguments that for every distribution $\mathcal{D}$ on $g^{-1}(0) \times g^{-1}(1)$ there exists a zero-error randomized (even deterministic) decision tree $\mathcal{R}$ of $g$ such that $\E_{(x, y)\sim \mathcal{D}}[\RS_E(|R, x, y)] \leq \RS_E(g)$. To prove Lemma~\ref{lem:main22} it thus suffices to exhibit a hard distribution $\mathcal{D}$ on $g^{-1}(0) \times g^{-1}(1)$ such that for every zero-error randomized tree $\mathcal{R}$ of $g$, $\E_{(x, y)\sim \mathcal{D}}[\RS_E(|\mathcal{R}, x, y)]$ is large. The first step in our proof is to define a hard distribution.
\subsection*{A hard distribution}
We define a probability distribution $\Prb_d$ on ${g_d}^{-1}(0) \times {g_d}^{-1}(1)$ as follows. Define $\Prb_0$ to be the point distribution $\{(0, 1)\}$. For $d \ge 1$, $\Prb_d$ is defined recursively by the following sampling procedure. Let $n:=2^{d-1}$.  \ \\
\begin{enumerate}
\item Sample $(x, y) \sim \Prb_{d-1}$. Let $x=(x_1, \ldots, x_n)$ and $y=(y_1, \ldots, y_n)$.
\item Sample $\overline{b}:=(b_1,\ldots, b_n)$ uniformly at random from $\bin^n$.
\item For each $i=1, \ldots, n$, let $u_i=(u_i^{(0)}, u_i^{(1)}), v_i=(v_i^{(0)}, v_i^{(1)}) \in \bin^2$ be defined as follows:
\begin{enumerate}
\item If $(x_i, y_i)=(0, 0)$, set $u_i, v_i \gets (1, 1)$.
\item If $(x_i, y_i)=(0, 1)$, set $u_i \gets (1, 1)$ and set $v_i \gets (b_i, 1-b_i)$.
\item If $(x_i, y_i)=(1, 0)$, set $u_i \gets (b_i, 1-b_i)$ and set $v_i \gets (1, 1)$.
\item If $(x_i, y_i)=(1, 1)$, set $u_i, v_i \gets (b_i, 1-b_i)$.
\end{enumerate}
\item Let $x'$ be the string obtained from $x$ by replacing each $x_i$ by $u_i$. Similarly let $y'$ be the string obtained from $y$ by replacing each $y_i$ by $v_i$.
\item Return $(x', y')$.
\end{enumerate}
\ \\
Notice that for each $i=1, \ldots, n$, $x_i=\mathsf{NAND}(u_i^{(1)}, u_i^{(2)})$ and $y_i=\mathsf{NAND}(v_i^{(1)}, v_i^{(2)})$. Hence, $g_d(x')=g_{d-1}(x)$ and $g_d(y')=g_{d-1}(y)$. Thus we inductively establish that $\Prb_d$ is supported on $g_d^{-1}(0) \times g_d^{-1}(1)$. The following observation can be verified to be true by a simple case analysis.
\begin{observation}
\label{obs:embed}
For each $i=1, \ldots, n$, $x_i=\overline{u_{i}^{(b_i)}}$ and $y_i=\overline{v_{i}^{(b_i)}}$. Furthermore, $u_{i}^{(1-b_i)}=v_{i}^{(1-b_i)}=1$.
\end{observation}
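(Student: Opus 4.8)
The plan is to verify the two claimed identities --- $x_i = \overline{u_i^{(b_i)}}$, $y_i = \overline{v_i^{(b_i)}}$, and $u_i^{(1-b_i)} = v_i^{(1-b_i)} = 1$ --- by walking through the four cases of step~3 in the definition of $\Prb_d$, namely $(x_i,y_i) \in \{(0,0),(0,1),(1,0),(1,1)\}$. The core arithmetic observation, which I would isolate first, is that for a bit $b_i \in \bin$ the pair $(b_i, 1-b_i)$ has its $b_i$-th coordinate equal to $b_i$ and its $(1-b_i)$-th coordinate equal to $1-b_i$, while the constant pair $(1,1)$ has both coordinates equal to $1$; here I am using the convention (implicit in the notation $u_i = (u_i^{(0)}, u_i^{(1)})$) that the superscript indexes the coordinate.

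First I would handle the claim $x_i = \overline{u_i^{(b_i)}}$. When $x_i = 0$ (cases (a) and (b)), we have $u_i = (1,1)$, so $u_i^{(b_i)} = 1$ and $\overline{u_i^{(b_i)}} = 0 = x_i$. When $x_i = 1$ (cases (c) and (d)), we have $u_i = (b_i, 1-b_i)$, so $u_i^{(b_i)} = b_i$; but wait --- this gives $\overline{u_i^{(b_i)}} = 1 - b_i$, which equals $x_i = 1$ only when $b_i = 0$. Here I realize the convention must instead be that $(b_i, 1-b_i)$ denotes the pair whose $b_i$-th coordinate is $0$; that is, the pair is ``$1$ in the $1-b_i$ slot, $0$ in the $b_i$ slot,'' so $u_i^{(b_i)} = 0$ and $u_i^{(1-b_i)} = 1$. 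With this reading, $x_i = 1$ gives $u_i^{(b_i)} = 0$, hence $\overline{u_i^{(b_i)}} = 1 = x_i$, as desired. I would state this convention explicitly at the outset and then the case check becomes immediate. The claim $y_i = \overline{v_i^{(b_i)}}$ is symmetric, using that $v_i = (1,1)$ exactly when $y_i = 0$ (cases (a) and (c)) and $v_i = (b_i, 1-b_i)$ exactly when $y_i = 1$ (cases (b) and (d)).

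Finally, for $u_i^{(1-b_i)} = v_i^{(1-b_i)} = 1$: in every one of the four cases, $u_i$ is either $(1,1)$ (both coordinates $1$) or $(b_i, 1-b_i)$ (which, under the convention above, is $1$ in the $1-b_i$ coordinate), so $u_i^{(1-b_i)} = 1$ unconditionally; likewise $v_i^{(1-b_i)} = 1$. This part needs no case split at all once the convention is fixed.

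The only genuine obstacle is pinning down the indexing convention for the pair $(b_i, 1-b_i)$ so that all three identities hold simultaneously; once that is settled (the $b_i$-slot carries the ``active'' bit $0$ and the $1-b_i$-slot carries the ``padding'' bit $1$), the observation is a one-line check per case. I would present it as a short table over the four cases rather than prose. This convention is also the natural one for the intended use: reading coordinate $b_i$ of the embedded copy recovers (the complement of) the original bit, while coordinate $1-b_i$ is a fixed $1$ that makes the sibling NAND-gate behave transparently.
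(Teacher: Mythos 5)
Your final case analysis is correct, and it is exactly the ``simple case analysis'' the paper alludes to (the paper offers no further proof of this observation). However, the detour in the middle---where you decide that the literal reading of the notation fails and that a different convention ``must'' be intended---rests on a miscalculation rather than on any ambiguity in the definition. Under the literal convention $u_i=(u_i^{(0)},u_i^{(1)})$ with $u_i^{(0)}=b_i$ and $u_i^{(1)}=1-b_i$, the $c$-th coordinate of $u_i$ is $b_i\oplus c$, not $b_i$; hence $u_i^{(b_i)}=b_i\oplus b_i=0$ and $u_i^{(1-b_i)}=b_i\oplus(1-b_i)=1$ for \emph{either} value of $b_i$. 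Your intermediate step ``$u_i^{(b_i)}=b_i$'' is valid only for $b_i=0$. Consequently, the convention you eventually adopt (the $b_i$-slot holds $0$, the $(1-b_i)$-slot holds $1$) is not an alternative reading of the definition---it is what the paper's definition already yields when the superscript is evaluated correctly. With that point repaired, the rest of your argument (the cases with $x_i=0$ or $y_i=0$ giving the constant pair $(1,1)$, the symmetry between $u_i$ and $v_i$, and the unconditional conclusion $u_i^{(1-b_i)}=v_i^{(1-b_i)}=1$) goes through verbatim, and presenting it as a four-row table is a reasonable way to write it up.
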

In light of Observation~\ref{obs:embed}, the sampling process above can be intuitively described as follows. We first sample $(x, y)$ from $\Prb_{d-1}$. Then, for each $i$, we sample two two-bit strings $u_i$ and $v_i$ that are jointly distributed in a certain way. If $x_i=y_i$, then $u_i=v_i$. If $x_i \neq y_i$, then the values of $x_i$ and $y_i$ are embedded (as complements) in the $b_i$-th bits of $u_i$ and $v_i$ respectively. The $(1-b_i)$-th bit of $u_i$ and $v_i$ are set to $1$. The marginals of $\Prb_d$ can be seen to be obtained by conditioning uniform distribution on the ``reluctant inputs" considered by Saks and Wigderson \cite{SW86} to the events $g(x)=0$ and $g(x)=1$. We couple these two conditional distributions in a specific way to obtain $\Prb_d$.
\subsection*{A sequence of algorithms}
Now we proceed to prove a lower bound on $\E_{(x,y)\sim\Prb_d}[\RS_E(\R,x, y)]$ for any zero-error algorithm $\mathcal{R}$ of $g_d$. Towards this goal, let $\mathcal{R}$ be a zero-error randomized query algorithm for $g_d$. Now, using $\mathcal{R}$, we will define a sequence of randomized query algorithms $\A_d, \A_{d-1}, \ldots, \A_1, \A_0$, where for each $t =d, d-1,\ldots, 0$, $\A_{t}$ is a zero-error randomized query algorithm for $g_{t}$. Define $\A_d:=\mathcal{R}$. Now for $t\leq d-1$, define $\A_t$ recursively as follows. Let $x=(x_1, \ldots, x_{2^t})$ be the input to $\A_t$.  \ \\
\begin{enumerate}
\item Sample $\overline{b}=(b_1, \ldots, b_{2^t})$ uniformly at random from $\bin^{2^t}$.
\item For each $i=1,\ldots, 2^t$, define $u_i \in \bin^2$ as in the definition of $\Prb_d$ above. Let $x' \in \bin^{2^{t+1}}$ be the string obtained from $x$ by replacing each $x_i$ by $u_i$.
\item Simulate $\A_{t+1}$ on $x'$. If $\A_{t+1}$ queries $u_i^{(1-b_i)}$ for some $i$, answer $1$. If $\A_{t+1}$ queries $u_i^{(b_i)}$ for some $i$, make a query to $x_i$ and answer $\overline{x_i}$. The correctness of this simulation follows from Observation~\ref{obs:embed}.
\item When $\A_{t+1}$ terminates, terminate and return what $\A_{t+1}$ returns.\footnote{The return value is not important here. We are bothered only about separating $x$ and $y$. The algorithms may be thought to have unlabelled leaves.}
\end{enumerate}\ \\
We observe that $g_t(x)=g_{t+1}(x')$. Thus we may inductively establish that for every $t=1,\ldots, d$, $\A_t$ is a zero-error randomized decision tree of $g_t$. Moreover, observe that sampling $(x, y)$ from $\Prb_t$ and running $A_{t}$  on $x$ (or $y$) amounts to sampling $(x', y')$ from $\Prb_{t+1}$ and running $A_{t+1}$ on $x'$ (or $y'$). Furthermore, $\A_t$ queries the first index $i$ such that $x_i \neq y_i$ exactly when the simulation of $\A_{t+1}$ inside it queries the first index $j$ such that $x'_j \neq y'_j$. We will index the bits of $x'$ as tuples $(i, b)$ where $i \in \{1,\ldots, 2^t\}$ and $b \in \bin$. Thus $x'_{(i,b)}=u_i^{(b)}$.
\subsection*{The lower bound}
For each $b \in \bin$, each $t=0,\ldots, d$ and each $(x, y)$ in the support of $\Prb_t$, define $Q(t, b, x, y)$ to be the number of variables with value $b$ queried by $\A_t$ when run on $x$ until (and including) $\A_t$ queries an index $i$ such that $x_i \neq y_i$. Define $Q(t, b)$ to be the expected value of $Q(t, b, x, y)$ for a random sample $(x, y)$ from $\Prb_t$, where the expectation is over both the internal randomness of $\A_t$, and the randomness of $\Prb_t$. Our goal is to derive a recursive relationship amongst the quantities $Q(t, b)$, and then obtain a lower bound on $Q(d, b)$. Since $\E[\RS_E(\mathcal{R}, x, y)]=Q(d, 0)+Q(d, 1)$, the lemma will follow.

Let $0\le t \leq d$. For $(x, y)$ in the support of $\Prb_t$ and $i \in \{1, \ldots, 2^t\}$ define 
$\mathsf{I}(t, b, i, x, y):=1$ if $x_i=b$ and $A_t$ queries $x_i$ when run on $x$ not later than it queries an index on which $x$ and $y$ differ, and define $\mathsf{I}(t, b, i, x, y):=0$ otherwise. 
We thus have that
\begin{align}
\label{eq:eq1}
Q(t, b, x, y)=\sum_{i=1}^{2^t} \mathsf{I}(t, b, i, x, y).    
\end{align} 
Consider $0 \leq t \leq d-1$, an $(x, y)$ in the support of $\Prb_t$, bits $b, b' \in \bin$ and $i \in \{1, \ldots, 2^t\}$ such that $x_i=b$. We are interested in a lower bound on the quantity
\begin{align}
\label{eq:eq-def}
F(t, b, b', i, x, y):=\frac{\E[\mathsf{I}(t+1, b', (i,0), x', y')+\mathsf{I}(t+1, b', (i,1), x', y')]}{\E[\mathsf{I}(t, b, i, x, y)]},
\end{align}
whenever the denominator is not $0$. We now describe $F$ in words. $t, b, b', i, x$ and $y$ are fixed. $x_i$ is assumed to be $b$. The denominator is the probability that $\A_t$ queries $x_i$ not later than it queries an index where $x$ and $y$ differ. The numerator is the expected number of $b'$-valued variables in $\{u_i^{(0)}, u_i^{(1)}\}$ that is queried by the simulation of $\A_{t+1}$ inside $\A_t$, not later than the simulation of $\A_{t+1}$ queries an index where $x'$ and $y'$ differ (which, as discussed before, is exactly when $\A_t$ queries an index where $x$ and $y$ differ). Both expectations are over the randomness of $\A_t$, which includes the sampling of $\overline{b}=(b_1, \ldots, b_{2^t})$ and the randomness in $\A_{t+1}$ that is simulated inside $\A_t$. Note that $x'$ and $y'$ are random strings, as they depend on $b_1, \ldots, b_{2^t}$.
\paragraph*{Lower bounding $F$}
We wish to show a lower bound on $F(t, b, b', i, x, y)$. Towards this, let us fix a deterministic decision tree $T$ in the support of $\A_{t+1}$. Furthermore, fix the values of all $b_j$ for $j \neq i$. This fixes all the bits of the string $x'$ except $u_i^{(0)}$ and $u_i^{(1)}$. Now, consider the expression for $F$ where the expectations are conditioned on the above fixings, and are only over the randomness of $b_i$ (notice that $b_i$ determines $u_i^{(0)}, u_i^{(1)}$ and whether $\A_t$ queries $x_i$). Under the above fixing, the action of $T$ on the variables $u_i^{(0)}$ and $u_i^{(1)}$ before it queries an index where $x'$ and $y'$ differ is a deterministic decision tree on these two variables. We assume that the tree is not the empty tree (which in particular implies that $T$ does not query an index where $x'$ and $y'$ differ before it queries any of $u_i^{(0)}$ and $u_i^{(1)}$). Assume further that if one of the two variables is queried and found to be $0$, the other one is not queried (as their NAND is already fixed to $1$, and so the value of $g_d$ is insensitive to the value of the other variable). Under these assumptions there are only two structurally different trees on two variables. The two trees $T_0$ and $T_1$ are given below. Two other trees can be obtained by interchanging the roles  of $u_i^{(0)}$ and $u_i^{(1)}$ in $T_0$ and $T_1$. However, from the symmetry of the NAND function and our distributions, considering $T_0$ and $T_1$ suffices.
\begin{tikzpicture}
\node at (-0.2, 0){};
\draw (0, 0) -- (2, 2);
\draw (2, 2) -- (4, 0);
\node at (2, 2.5) {\Large $u_i^{(0)}?$};
\node at (0.7, 1){\Large $0$};
\node at (3.3, 1){\Large $1$};
\node at (2, -2){\Large 
 $\mathbf{T}_0$};
\draw (7, 1) -- (9, 3);
\draw (9, 3) -- (11, 1);
\draw (11, 1) -- (9, -1);
\draw (11, 1) -- (13, -1);
\node at (9, 3.5) {\Large $u_i^{(0)}?$};
\node at (11.9, 1){\Large $u_i^{(1)}?$};
\node at (7.7, 2){\Large $0$};
\node at (10.3, 2){\Large $1$};
\node at (12.3, 0){\Large $1$};
\node at (9.7, 0){\Large $0$};
\node at (11, -2){\Large 
 $\mathbf{T}_1$};
\end{tikzpicture}
\\We now show how to bound $F$ for $b=1$ and $b'=0$. Bounds for other combinations can be derived similarly; we list them in Table~\ref{tab:tab1}.

First consider tree $T_0$. Assume that $x_i=b=1$. Consider the denominator of $F$. $A_t$ queries $x_i$ if and only if $T_0$ queries $u_i^{(b_i)}$. $T_0$ queries only $u_i^{(0)}$. Thus, $T_0$ queries $u_i^{(b_i)}$ if and only if $b_i=0$, which happens with probability $1/2$. Thus the denominator is $1/2$.

Now consider the numerator. Number of variables with value $b'=0$ queried by $T$ is $1$ if $u_i^{(0)}=0$ and $0$ otherwise. $u_i^{(0)}=0$ if and only if $b_i=0$, which happens with probability $1/2$. Thus the denominator is $\frac{1}{2}\cdot 1 = 1/2$. Hence, in this case, $F=(1/2)/(1/2)=1$.

Next, consider tree $T_1$. In this case, $x_i$ is guaranteed to be queried, as the tree always queries the variable whose value is $0$. Thus, the denominator is $1$. The numerator is also $1$; exactly one of the two variables is $b'=0$ and $T_1$ stops when it queries a $0$. Thus, in this case too, $F = 1/1 = 1$.

We conclude that when $b=1$ and $b'=0$, a lower bound on $F$ is $\min\{1, 1\}=1$.
\begin{table}
\label{table:table1}
\centering
\begin{tabular}{|c | c | l |} 
 \hline
 $b$ & $b'$ & $F$ \\ [0.5ex] 
 \hline\hline
 $0$ & $0$ & $\ge 0$  \\ 
 $0$ & $1$ & $\ge 2$  \\
 $1$ & $0$ & $\ge 1$  \\
 $1$ & $1$ & $\ge 1/2$  \\
 \hline
\end{tabular}
\caption{Lower bounds on $F$.}
\label{tab:tab1}
 \end{table}
The above analysis holds for a fixed $T$, as long as its restriction to $\{u_i^{(1)}, u_i^{(2)}\}$ until it queries an index where $x'$ and $y'$ differ, is not an empty tree. By averaging, the lower bounds in Table~\ref{tab:tab1} hold for $\A_{t+1}$ and a random $b_1,\ldots, b_{2^t}$ as long as with positive probability the aforementioned restricted tree is not empty.
\paragraph*{A recursive relation for $Q(t, b)$}
Fix $0 \le t\le d-1$, inputs $x, y \in \bin^{2^t}$ such that $(x, y)$ is in the support of $\Prb_{t}$, and bit $b' \in \bin$. Now, consider $Q(t+1, b', x', y')$. Note that $x'$ and $y'$ are random strings, and are determined by $x, y$ (fixed) and $b_1, \ldots, b_{2^t}$ (random). We have that
\begin{align}
\label{eqn:rec1}
Q(t+1, b', x', y')=\sum_{i=1}^{2^t} (\mathsf{I}(t+1, b', (i, 0), x', y')+\mathsf{I}(t+1, b', (i, 1), x', y')).
\end{align}
We split the above sum into two parts depending on $x_i$.
\begin{align}
\label{eqn:rec2}
Q(t+1, b', x', y')=&\sum_{1\leq i\leq 2^t, x_i=0}(\mathsf{I}(t+1, b', (i, 0), x', y')+\mathsf{I}(t+1, b', (i, 1), x', y'))  \nonumber \\
&+\sum_{1\leq i\leq 2^t, x_i=1}(\mathsf{I}(t+1, b', (i, 0), x', y')+\mathsf{I}(t+1, b', (i, 1), x', y')).
\end{align}
Now, we take an expectation on both sides over $b_1, \ldots, b_{2^t}$ and the randomness of $\A_{t+1}$, and apply linearity of expectation.
\begin{align}
\label{eqn:rec3}
\E[Q(t+1, b', x', y')]&=\sum_{1\leq i\leq 2^t, x_i=0}\E[\mathsf{I}(t+1, b', (i, 0), x', y')+\mathsf{I}(t+1, b', (i, 1), x', y') ] \nonumber \\
&+\sum_{1\leq i\leq 2^t, x_i=1}\E[\mathsf{I}(t+1, b', (i, 0), x', y')+\mathsf{I}(t+1, b', (i, 1), x', y')].
\end{align}
Note that if $\E[\mathsf{I}(t, 0, i, x, y)]$ is not $0$, then the summands of the first sum are $F(t, 0, b', i, x, y)\cdot \E[\mathsf{I}(t, 0, i, x, y)]$. A similar statement holds for the second sum. We thus have,
\begin{align}
\label{eqn:rec4}
\E[Q(t+1, b', x', y')]&\ge\sum_{1\leq i\leq 2^t, x_i=0, \mathsf{I}(t, 0, i, x, y)\neq 0}F(t, 0, b', i, x, y)\cdot \E[\mathsf{I}(t, 0, i, x, y)] \nonumber \\
&+\sum_{1\leq i\leq 2^t, x_i=1, \mathsf{I}(t, 1, i, x, y)\neq 0}F(t, 1, b', i, x, y)\cdot \E[\mathsf{I}(t, 1, i, x, y)].
\end{align}
We would now like to consider $b'=0$ and $1$ separately, and plug the bounds of Table~\ref{tab:tab1} into Equation~(\ref{eqn:rec4}). If $\E[\mathsf{I}(t, b, i, x, y)]$ is non-zero, then with positive probability, the restriction of the tree $T$ considered earlier to variables $u_i^{(0)}, u_i^{(1)}$ is not the empty tree; thus the  lower bounds of Table~\ref{tab:tab1} are applicable. We thus have
\begin{align}
&\E[Q(t+1, 0, x', y')] \geq \E[Q(t, 1, x, y)] \label{eq:rel1},\mbox{ and} \\
&\E[Q(t+1, 1, x', y')] \geq 2 \E[Q(t, 0, x, y)] + \frac{1}{2} \E[Q(t, 1, x, y)]. \label{eq:rel2}
\end{align}
Finally, we take expectations over $(x, y) \sim \Prb_t$. As discussed before, this has the effect of inducing the distribution $\Prb_{t+1}$ on $(x', y')$. We thus have
\begin{align}
&Q(t+1, 0) \geq Q(t, 1) \label{eq:finrel1},\mbox{ and} \\
&Q(t+1, 1) \geq 2 Q(t, 0) + \frac{1}{2} Q(t, 1). \label{eq:finrel2}
\end{align}
One can directly check by enumerating all deterministic zero-error trees for $t=0, 1$ that $Q(0, 0), Q(0, 1), Q(1, 0)$ and $Q(1, 1)$ are all $\Omega(1)$. It thus follows from Equations~$(\ref{eq:finrel1})$ and $(\ref{eq:finrel2})$ that $Q(t, b) = \Omega(\alpha^t)$ for $b \in \bin$. In particular, $Q(d, 0), Q(d, 1) = \Omega(\alpha^d)$. This completes the proof of Lemma~\ref{lem:main22}.
\newcommand{\etalchar}[1]{$^{#1}$}

\appendix
\section{Proof of Claim~\ref{clm:rprod>=dprod}}
\label{sec:dprod<=rprod}
In this section we prove Claim~\ref{clm:rprod>=dprod}.
\begin{proof}[Proof of Claim~\ref{clm:rprod>=dprod}]
Let $\mathcal{R}$ be a randomized decision tree that achieves $\R^{\product}_\epsilon(g)$. Fix a product distribution $\mu$. From $\mathcal{R}$, we will construct a deterministic decision tree (with labelled leaves) $T'$ that errs with probability at most $\epsilon$ with respect to $\mu$. This will complete the proof.

To this end, consider any deterministic decision tree $T$ (with unlabelled leaves) in the support of $\mathcal{R}$. We label each leaf $\ell$ of $T$ as follows. Condition $\mu$ on $\ell$ (assume that the conditional probability is defined; otherwise label $\ell$ arbitrarily). If the probability of the event ``$g(x)=1$'' with respect to this conditional distribution is at least $1/2$, we label $\ell$ as $1$. Else, we label $\ell$ as $0$.

In this way we label each leaf of each deterministic decision tree in the support of $\mathcal{R}$. By the guarantee of $\mathcal{R}$, the resulting randomized decision tree (with labelled leaves) computes $g$ on inputs from $\mu$ with error at most $\epsilon$.

Finally, by averaging, it follows that there exists a deterministic tree $T'$ in the support of $\mathcal{R}$ which computes $g$ on a random $x \sim \mu$ with error probability at most $\epsilon$. 
\end{proof}
\end{document}